\newtheorem{thm}{Theorem}
\newtheorem{theorem}[thm]{Theorem}
\newtheorem{lemma}[thm]{Lemma}
\author[Mojtaba Ostovari and Alireza Zarei]{Mojtaba Ostovari\affiliationmark{1} \and Alireza Zarei\affiliationmark{1}
}
\title[Improved Approximations for Correlation Clustering]
{Improved Combinatorial Approximations for Weighted Correlation Clustering}
\affiliation{Department of Mathematical Sciences, Sharif University of Technology, Tehran, Iran}
\keywords{Approximation Algorithm, Correlation Clustering, Cluster Editing, Clustering Aggregation, Consensus Clustering}
\begin{document}
\publicationdata{vol. 27:2}{2025}{16}{10.46298/dmtcs.12424}{2023-10-17; 2023-10-17; 2024-07-15; 2025-04-14}{2025-06-03}
\maketitle
\begin{abstract}
\vspace{0.18cm}
We present combinatorial approximation algorithms for the weighted correlation clustering problem. In this problem, we have a set of vertices and two weight values for each pair of vertices, denoting their difference and similarity. The goal is to cluster the vertices with minimum total intra-cluster difference weights plus inter-cluster similarity weights.
We present two results for weighted instances with $n$ vertices:
\begin{itemize}
    \item A randomized 3-approximation combinatorial algorithm for instances that satisfy probability constraints, running in $O(n^2)$ time. 
    This improves the $O(n^6)$ running time of the previous best-known combinatorial approximation, a 3-approximation algorithm, introduced by Chawla et al. (2015).
    \item A randomized 1.6-approximation combinatorial algorithm for instances that satisfy probability and triangle inequality constraints, running in $O(n^2)$ time. 
    This improves the longstanding combinatorial 2-approximation of Ailon et al. (2008) while matching its running time.
\end{itemize}
\end{abstract}
\section{Introduction}
In the correlation clustering problem, we have a set of vertices, and each pair of vertices are either similar or different. The goal is to find a clustering for the vertices with an arbitrary number of clusters that minimizes the number of pairs of similar vertices in different clusters or different pairs inside the same cluster. In this paper, we consider the weighted version, called weighted correlation clustering problem. In this version, there is a set of vertices and for each pair of vertices, there are two weighted edges that indicate the weight of their similarity and difference. We want to partition these vertices into a set of clusters in such a way that minimizes the sum of the difference weights over all pairs that are in the same cluster plus, the sum of the similarity weights over all pairs that are not in the same cluster. The unweighted version is a special case of the weighted version in which the weights are either zero or one.

Let $V$ be the set of vertices, and $D$ be the set of \textit{unordered}\footnote{By unordered we meant that there is no distinction between $(i,j)$ and $(j,i)$ pairs.} pairs of $V$.
Formally, an instance of the weighted correlation clustering can be represented by a weighted graph $(V, w^{+}, w^{-})$
where for any $(i, j) \in D$, there are two edges with weights $w^{+}_{ij}\geq0$ and $w^{-}_{ij}\geq0$ which indicate similarity and difference weights for this pair, respectively. Note that as $(i,j)$ is \textit{unordered} $w^{+}_{ij}=w^{+}_{ji}$ and $w^{-}_{ij}=w^{-}_{ji}$.
Define $c_{ij}$ as the contributed cost of a pair $(i,j)\in D$ for clustering $C$ that is
$w^{-}_{ij}$ if the vertices are in the same cluster and 
$w^{+}_{ij}$ if they belong to different clusters.
The goal is to find a clustering $C$ with an arbitrary number of clusters that minimizes the total cost defined as 
$$
cost(C) = \sum_{(i, j) \in D} c_{ij}.
$$

An instance of this problem satisfies the probability constraints if 
$w^{+}_{ij} + w^{-}_{ij} = 1$ for all $(i, j) \in D$.
In this case, there is a fractional similarity/difference between any two vertices. 
An instance satisfies the triangle inequality If $w^{-}_{ik} \leq w^{-}_{ij} + w^{-}_{jk}$ for any distinct $i, j, k \in V$.
The instance is unweighted if it satisfies probability constraints and $w^{+}_{ij}, w^{-}_{ij} \in \{0, 1\}$. This unweighted version is also known as the Cluster Editing problem~(\cite{Shamir}). 

The weighted correlation clustering problem with probability and triangle inequality constraints has an application in the Clustering Aggregation problem~(\cite{Clustering_Aggregation}). 
In this problem, we have a set of clusterings $C_1, C_2, \cdots, C_n$ on a given set of vertices $V$ and the objective is to find a clustering $C$ that minimizes $\sum_{i=1}^{n} d(C, C_i)$ where $d(C, C_i)$ is the number of distinct pairs of vertices that exist in the same cluster of $C$, but in different clusters of $C_i$, or vice versa. This problem can be reduced to the weighted correlation clustering problem with probability and triangle inequality constraints: consider the instance $(V, w^+, w^-)$ where $w^{+}_{ij}$ equals the number of clusterings in which vertices $i$ and $j$ belong to the same cluster divided by $n$, and $w^{-}_{ij}$ equals the number of clusterings in which vertices $i$ and $j$ are in different clusters divided by $n$. Clearly, this instance satisfies both probability and triangle inequality constraints. Furthermore, an optimal solution for this problem is also an optimal solution for the instance $C_1, C_2, \cdots, C_n$ of the Clustering Aggregation problem.

\subsection{Related works}
Correlation clustering was first introduced by \cite{Bansal04}. They showed that the decision version of this problem is NP-complete. Furthermore, they designed a constant factor approximation algorithm and a PTAS for the maximization version of this problem. In the maximization problem, the goal is to find a clustering that maximizes the number of pairs of similar vertices inside the same cluster and different pairs in different clusters. However, the story is different for the minimization version. In the remainder of this paper, any reference to this problem specifically refers to the minimization version. 

\cite{DEMAINE} showed that the weighted correlation clustering when the instance's graph is not complete is APX-hard (it does not have a PTAS unless P=NP) and gave a $O(\log n)$-approximation algorithm for this problem.

\cite{CHARIKAR05} showed that the weighted correlation clustering problem is APX-hard even on complete graphs. They also gave a 4-approximation algorithm for this problem using a pivoting method. Briefly, this method continues as a sequence of steps starting from step $i=0$ with $V_0=V$. Then, in the $i^{th}$ step, a pivot vertex $p_i$ is chosen uniformly at random from the set of vertices $V_i$; a new cluster $C_i=\{p_i\}$ is initialized; some of the other vertices of $V_i$ are assigned to $C_i$ based on some criteria on these vertices and $p_i$; and finally, $V_{i+1}$ of the next step is defined to be $V_{i+1}=V_i-C_i$. This algorithm terminates at step $t$ when $V_{t+1} = \emptyset$, and the output clusters are $C_0,\dots,C_t$.
The criteria used to add vertices to clusters are based on the values of variables of the optimal solution of the corresponding relaxed linear programming formulation shown in LP \ref{LP:CWCC}. In this formulation we have a set of variables $\{ x_{ij} \in \{0,1\} \,|\,(i,j) \in D \}$, if vertices $i$ and $j$ are in the same cluster then $x_{ij} = 0$, otherwise $x_{ij} = 1$. For any distinct $i, j, k \in V$ if $x_{ij}=0$ and $x_{jk}=0$, then vertices $i$, $j$ and $k$ are in the same cluster, so~$x_{ki}$ must be 0. This condition has been expressed by the triangle inequality that is enforced by the first set of constraints of this LP. Recall that the members in $D$ are \textit{unordered} meaning that $x_{ij}=x_{ji}$. The constraints $x_{ij} \in \{0,1\}$ cannot be expressed as linear programming constraints, instead, the second set of constraints of LP~\ref{LP:CWCC} is used to obtain a linear programming relaxation. The best-known lower bound for the integrality gap of this LP is 2. However, \cite{CHARIKAR05} obtained their 4-approximation algorithm using the solution of this LP in their pivoting method.
\begin{align}
\boxed{
	\label{LP:CWCC}
	\begin{array}{ll@{}ll}
		\textrm{minimize}  
		& \displaystyle\sum\limits_{i<j}  (1-x_{ij})w^{-}_{ij} + x_{ij}w^{+}_{ij} \\
		\textrm{subject to}
		& x_{ik} \leq x_{ij} + x_{jk}  \hspace{1cm} 
		&\textrm{
			for all distinct $i, j, k \in V$
		} \\
		& 0 \leq x_{ij} \leq 1              
		&\textrm{
			for all $(i, j) \in D$
		}
	\end{array}
}
\end{align}

Using a better pivoting method, \cite{Ailon08} introduced another LP-based algorithm whose approximation factor is 2.5 when the instance satisfies the probability constraints, and 2 if it satisfies both the probability and the triangle inequality constraints. The criteria used in their algorithm to assign the vertices to clusters are also based on the values of variables of LP~\ref{LP:CWCC}, however in a better way which leads to this improvement.
They also gave a 5-approximation combinatorial (without requiring solving the LP) algorithm using another pivoting method when the instance satisfies the probability constraints with $O(n^2)$ running time. 
For the unweighted instances it is a 3-approximation algorithm, and for the instances that satisfy the triangle inequality constraints in addition to probability constraints, it is a 2-approximation algorithm. 
To the best of our knowledge, this is the best randomized combinatorial approximation algorithm for these instances.
In this algorithm, in the $i^{th}$ step of the pivoting method, a vertex $v\in V_i$ is added to $C_i$ if $w^{+}_{v p_i} \geq w^{-}_{v p_i}$.
For the unweighted instances (cluster editing problem), \cite{veldt} recently gave a deterministic combinatorial 6-approximation algorithm.

Chawla \textit{et al.} by using an improved pivoting method and solving the LP instance, gave a 2.06-approximation algorithm when the instance satisfies the probability constraints, which almost matches the integrality gap~(\cite{NearOptimalClustering}). Their algorithm has 1.5 approximation factor when the instance satisfies the triangle inequality constraints in addition to the probability constraints.

\cite{Cohen-Addad} proposed a (1.994+$\epsilon$)-approximation for this problem which is based on different LP formulations and the $O(1/\epsilon^2)$ rounds of the Sherali-Adams hierarchy. To round a solution to the Sherali-Adams relaxation, they adapt the correlated rounding developed for CSPs~(\cite{BRS,GS}). Very soon \cite{Cohen-Addad2} improved it to 1.73 by using a new linear programming formulation and relying on the Sherali-Adams hierarchy too. These algorithms require a solution of an LP with $O(n^{1/\epsilon^{\Theta(1)}})$ variables.

Chawla \textit{et al.} also presented a reduction which turns any $\alpha-$approximation algorithm for the unweighted instances to an $\alpha-$approximation algorithm for instances that satisfy the probability constraints. 
In this reduction, they first pick a sufficiently large integer number $N = O(n^2)$. Then, any vertex $v$ of the instance is replaced by $N$ new vertices $v_1, v_2,\cdots, v_N$. 
For each pair of vertices $u_i$ and $v_j$ such that $u\neq v$, with probability $w^{+}_{uv}$,
we have $w^{+}_{u_i v_j}=1$ and $w^{-}_{u_i v_j}=0$, otherwise (with probability $w^{-}_{uv}$) 
we have $w^{+}_{u_i v_j}=0$ and $w^{-}_{u_i v_j}=1$.
Also, for each pair of vertices $v_i$ and $v_j$, 
we have $w^{+}_{v_i v_j} = 1$ and~$w^{-}_{v_i v_j} = 0$.
The result is a new unweighted instance with $O(n^3)$ vertices. Consider an $\alpha-$approximation solution for this instance. 
For any vertex $v$, one of the vertices $v_1, v_2,\cdots, v_N$ is turned to $v$ uniformly at random, and the others are removed. This new clustering is an $\alpha-$approximation solution for the first instance.
Therefore, using this reduction and the 3-approximation combinatorial algorithm with $O(n^2)$ running time for the unweighted instances introduced by \cite{Ailon08}, we have a 3-approximation combinatorial algorithm for the weighted instances that satisfy the probability constraints. To the best of our knowledge, it is the best combinatorial approximation algorithm for this problem but its running time is $O(n^6)$, because in this reduction, we need a solution of an unweighted instance with $O(n^3)$ vertices.

\subsection{Our Contributions}
In this article, we introduce a combinatorial randomized approximation algorithm with $O(n^2)$ run time for the mentioned problem:
a 3-approximation algorithm for the weighted correlation clustering problem that satisfies probability constraints, and 
a 1.6-approximation algorithm when the triangle inequality constraints are satisfied as well.

Our algorithm uses the pivoting method introduced by \cite{CHARIKAR05} to approximate this problem. Later, using this method \cite{Ailon08} and \cite{NearOptimalClustering} gave better approximations for this problem later. The analysis we will give for our algorithm is called triangle-based analysis, which was introduced by \cite{Ailon08} and later, has been used by \cite{NearOptimalClustering} to analyze their algorithms.
 As the main contribution of our work, we define a function on the edge weights and use this function to assign the vertices to the clusters. Then, we do not need to solve an LP and this enables our method to run much faster than LP-based algorithms. Precisely, the running time of our algorithm is $O(n^2)$ for a set of $n$ vertices while the LP-based method for $\alpha=2.06$ must solve an LP with $\Theta(n^2)$ variables, which makes them impractical for $n>1000$. Although the approximation factors of our algorithm are slightly greater than the best-known LP-based algorithms, its smaller running time makes it superior and practical for real applications. 
Table \ref{tabel:summarize} compares our algorithms to the previous best-known combinatorial algorithms for this problem.
\begin{table}[h]
\caption{Comparison of our combinatorial algorithms with the previous best-known combinatorial algorithms.}
\begin{tabular}{|l|cc|cc|}
\hline
\multicolumn{1}{|c|}{\multirow{2}{*}{\textbf{Problem}}}                                                                              & \multicolumn{2}{c|}{\textbf{\begin{tabular}[c]{@{}c@{}}Previous Best-Known \\ Combinatorial Algorithms\end{tabular}}}                             & \multicolumn{2}{c|}{\textbf{\begin{tabular}[c]{@{}c@{}}Our Combinatorial \\ Algorithms\end{tabular}}} \\ \cline{2-5} 
\multicolumn{1}{|c|}{}                                                                                                               & \multicolumn{1}{c|}{\begin{tabular}[c]{@{}c@{}}Approximation\\ factor\end{tabular}} & Run time                                                    & \multicolumn{1}{c|}{\begin{tabular}[c]{@{}c@{}}Approximation\\ factor\end{tabular}}     & Run time    \\ \hline
\begin{tabular}[c]{@{}l@{}}Weighted Correlation Clustering\\ with Probability Constraints\end{tabular}                           & \multicolumn{1}{c|}{\begin{tabular}[c]{@{}c@{}}5 (\cite{Ailon08})\\ 3 (\cite{NearOptimalClustering})\end{tabular}}                  & \begin{tabular}[c]{@{}c@{}}$O(n^2)$\\ $O(n^6)$\end{tabular} & \multicolumn{1}{c|}{3}                                                                  & $O(n^2)$    \\ \hline
\begin{tabular}[c]{@{}l@{}}Weighted Correlation Clustering\\ with Probability and\\ Triangle Inequality Constraints\end{tabular} & \multicolumn{1}{c|}{2 (\cite{Ailon08})}                                                              & $O(n^2)$                                                    & \multicolumn{1}{c|}{1.6}                                                                & $O(n^2)$    \\ \hline
\end{tabular}
\label{tabel:summarize}
\end{table}
\section{Our Clustering Algorithm}\label{sub_section:the_algorithm}
Our algorithm for weighted correlation clustering needs a rounding function 
$f: V \times V \rightarrow [0,1]$ such that $f(a, b)=f(b, a)$ for any $a,b\in V$, to decide in the $i^{th}$ step of the algorithm, whether a vertex should be added to the cluster $C_i$ or not. 
Precisely, the algorithm starts with $V_0 = V$ and in each step $i\geq0$, a vertex $p \in V_i$ is selected uniformly at random as a pivot and a new cluster $C_i=\{p\}$ is initialized. Then, each vertex $u \in V_i-\{p\}$ is added to $C_i$ with probability $1-f(u,p)$. The algorithm terminates when $V_i=\emptyset$. Finally, the output clusters will be $C_0, \dots, C_{i-1}$.
For simplicity, we use $f^{-}_{up}$ and $f^{+}_{up}$ instead of $f(u, p)$ and $1-f(u, p)$.
The pseudocode of our approximation algorithm for weighted correlation clustering, \textsc{QuickCluster} is depicted in Algorithm \ref{alg:QuickCluster}.

If the input instance satisfies the probability constraints, we define $f^{-}_{ij}$ to be $w^{-}_{ij}$, and
if it satisfies the triangle inequality constraints in addition to the probability constraints, we set $f^{-}_{ij}$ as $h(w^{-}_{ij})$, where
$$ 
h(w) = 
\begin{cases}
0 & \text{
	if 
	$\; w \in I_1 = [0, 0.35]$
} \\
\dfrac{25}{7}w - \dfrac{5}{4} & \text{
	if 
	$\; w \in I_2 = [0.35, 0.63]$
} \\
1 & \text{
	if 
	$\; w \in I_3 = [0.63, 1]$
}
\end{cases}
$$
\begin{algorithm}
\newcommand{\myfuncsty}[1]{\textsc{\textbf{#1}}}
\SetKwData{VL}{$V_L$}\SetKwData{VR}{$V_R$}
\SetFuncSty{textbf}
\SetKwFunction{Algorithm}{\textsc{QuickCluster}}
\SetKwProg{Fn}{}{:}{}
\Fn{\Algorithm{$(V, w^{+}, w^{-})$}}{
    set $i=0$ and $V_0=V$\;
    \While{$V_i \neq \emptyset$}{
	    pick $p \in V_i$ uniformly at random as pivot\;
	    make cluster $C_i=\{p\}$\;
		\For{each $u\in V_{i}-\{p\}$}{\label{forins}
			with probability $f^{+}_{up}$ add $u$ to $C_i$\;
		}
		$V_{i+1}=V_{i}-C_{i}$\;
		$i=i+1$\;
    }
\Return $C_0, \dots, C_{i-1}$
}
\caption{Pseudocode of \textsc{QuickCluster}.}
\label{alg:QuickCluster}
\end{algorithm}

In the rest of this section, we analyze the cost of our algorithm and compute its approximation factor. To do this, although we do not solve an LP, we compare the cost of our approximation solution to the optimal solution of the corresponding LP as a lower bound for the optimal cost.

Remind that $c_{ij}$ which is the contributed cost of a pair of vertices $i$ and $j$ in the total cost of a solution is either $w^{+}_{ij}$ or $w^{-}_{ij}$ and depends on whether these vertices belong to different clusters or the same one. We say the contributed cost is \textit{controlled} if in the latest step $l$ of the while loop in \textsc{QuickCluster} where both $i$ and $j$ exist in $V_l$, either $i$ or $j$ is picked as a pivot. Otherwise, the cost is called \textit{uncontrolled} which happens when another vertex $k$ in a step $l$ is chosen as the pivot and at least one of the vertices $i$ or $j$ is added to~$C_l$.

Assume that the output clusters of our algorithm are $C_0, \dots, C_t$. Let $A_{ij[k]}$ be the event in which for a step $r \in \{0,\dots,t\}$ in our algorithm, vertices $i$, $j$, and $k$ exist in $V_{r}$ and $k$ has been selected as this step pivot. Since the pivot is chosen uniformly at random, events $A_{ij[k]}$, $A_{ki[j]}$, and $A_{jk[i]}$ happen with equal probabilities when these three vertices exist in $V_{r}$.
Let 
$A_{ij[k]}^{-} \subset A_{ij[k]}$ be the event that $i$ and $j$ are added to $C_r$. 
Therefore when this event happens 
$\Pr[A_{ij[k]}^{-}] = \Pr[A_{ij[k]}] f^{+}_{ik}f^{+}_{jk}$ and 
$c_{ij}=w^{-}_{ij}$. 
Analogously, 
$A_{ij[k]}^{+} \subset A_{ij[k]}$ is the event that only one of $i$ and $j$ is added to $C_r$. 
Therefore when this event happens 
$\Pr[A_{ij[k]}^{+}] = \Pr[A_{ij[k]}]
(f^{+}_{ik}f^{-}_{jk} + f^{-}_{ik}f^{+}_{jk})$ and 
$c_{ij}=w^{+}_{ij}$.
Note that in both these cases, the contributed cost of the pair of vertices $i$ and $j$ is \textit{uncontrolled}. Therefore, the expected value of the \textit{uncontrolled} cost of the vertices $i$ and $j$ for the specific vertex $k$ is 
$$
\Pr[A_{ij[k]}^{-}]w^{-}_{ij} + Pr[A_{ij[k]}^{+}]w^{+}_{ij}=
\Pr[A_{ij[k]}]f^{+}_{ik}f^{+}_{jk}w^{-}_{ij} + 
\Pr[A_{ij[k]}](f^{+}_{ik}f^{-}_{jk} + f^{-}_{ik}f^{+}_{jk}) w^{+}_{ij}.
$$

Define $A_{ij[k]}^{*} := A_{ij[k]}^{-} \cup A_{ij[k]}^{+}$. The cost of the pair $i$ and $j$ is \textit{uncontrolled} if and only if the event $\bigcup_{k \in V-\{i, j\}} A_{ij[k]}^{*}$ happens.
Events $A_{ij[k]}^{*}$ for all $k \in V-\{i, j\}$ are disjoint because if one of these events happens, it sets $i$, $j$ or both of them to a cluster and the other events can no longer happen. Therefore 
$$\Pr[\bigcup_{k \in V-\{i, j\}} A_{ij[k]}^{*}] = 
\sum_{k \in V-\{i, j\}} \Pr[A_{ij[k]}^{*}].$$
Let $QC$ be the clustering returned by \textsc{QuickCluster}.
We define random variables $C_{QC}$ and $UC_{QC}$ as the sum of \textit{controlled} and \textit{uncontrolled} costs in $QC$, respectively. Then 
\begin{align*}
    E[ cost(QC) ] = E[C_{QC} ] + E[UC_{QC} ].
\end{align*}
The expected value of $UC_{QC}$ and $C_{QC}$ are computed as follows.
\begin{align}\label{eq:b_qcr}
E[UC_{QC}] =& 
\sum_{i<j} \sum_{k \in V-\{i,j\}}
\Pr[A_{ij[k]}^{-}] w^{-}_{ij} + \Pr[A_{ij[k]}^{+}] w^{+}_{ij}
\nonumber\\ =&
\sum_{i<j} \sum_{k \in V-\{i,j\}}
\Pr[A_{ij[k]}]
\big(
f^{+}_{ik}f^{+}_{jk} w^{-}_{ij} +
(f^{+}_{ik}f^{-}_{jk} + f^{-}_{ik}f^{+}_{jk}) w^{+}_{ij}
\big)
\nonumber\\ =&
\sum_{i<j<k} 
\Pr[A_{ij[k]}] \phi_{ijk}
\end{align}
where
\begin{align*}
\phi_{ijk} =&
f^{+}_{ik}f^{+}_{jk} w^{-}_{ij} +
(f^{+}_{ik}f^{-}_{jk} + f^{-}_{ik}f^{+}_{jk}) w^{+}_{ij} \\&+
f^{+}_{ji}f^{+}_{ki} w^{-}_{jk} +
(f^{+}_{ji}f^{-}_{ki} + f^{-}_{ji}f^{+}_{ki}) w^{+}_{jk} \\&+
f^{+}_{kj}f^{+}_{ij} w^{-}_{ki} +
(f^{+}_{kj}f^{-}_{ij} + f^{-}_{kj}f^{+}_{ij}) w^{+}_{ki}
\end{align*}
and
\begin{align}\label{eq:g_qcr}
E[C_{QC}] =& 
\sum_{i<j} 
\big( 1 - \sum_{k \in V-\{i,j\}} 
\Pr[A_{ij[k]}^{*}] \big)
\big( f^{+}_{ij} w^{-}_{ij} + f^{-}_{ij} w^{+}_{ij} \big).
\end{align}

To obtain an upper bound for the approximation factor of the cost of the approximation solution $QC$, we use the optimal solution of LP \ref{LP:CWCC} as a lower bound for the cost of the optimal solution and compare the cost of $QC$ to this value. Let $OPT$ be the optimal clustering for the weighted correlation clustering problem. Trivially $cost(LP)\leq cost(OPT)$ and if $cost(QC)\leq \alpha \cdot cost(LP)$ then $QC$ is an $\alpha$-approximation solution.
Let $cost(LP)$ be the optimal value of LP \ref{LP:CWCC} that is 
$\sum_{i<j} (1-x_{ij})w^{-}_{ij} + x_{ij}w^{+}_{ij} $. We can rewrite this cost as
\begin{align*}
   cost(LP) & =  \sum_{i<j} (1-x_{ij})w^{-}_{ij} + x_{ij}w^{+}_{ij} \\&
   =\sum_{i<j} \Big(\sum_{k \in V-\{i,j\}} \Pr[A_{ij[k]}^{*}] + 1 - \sum_{k \in V-\{i,j\}} \Pr[A_{ij[k]}^{*}] \Big)\big((1-x_{ij})w^{-}_{ij}+x_{ij}w^{+}_{ij}\big) \\&
    =\sum_{i<j} \sum_{k \in V-\{i,j\}} \Pr[A_{ij[k]}^{*}] \big((1-x_{ij})w^{-}_{ij}+x_{ij}w^{+}_{ij}\big) \\&\quad
    + \sum_{i<j} \Big( 1- \sum_{k \in V-\{i,j\}} \Pr[A_{ij[k]}^{*}] \Big)
\big((1-x_{ij})w^{-}_{ij}+x_{ij}w^{+}_{ij}\big)\\&
    = UC_{LP} + C_{LP}
\end{align*}
where
\begin{align}\label{eq:b_lp}
UC_{LP} =& 
\sum_{i<j} \sum_{k \in V-\{i,j\}} 
\Pr[A_{ij[k]}^{*}] \big((1-x_{ij})w^{-}_{ij}+x_{ij}w^{+}_{ij}\big) = 
\sum_{i<j<k} \Pr[A_{ij[k]}] \psi_{ijk}
\end{align}
for
\begin{align*}
\psi_{ijk} =& 
(f^{+}_{ik}f^{+}_{jk} + f^{+}_{ik}f^{-}_{jk} + f^{-}_{ik}f^{+}_{jk}) 
\big(x_{ij}w^{+}_{ij}+(1-x_{ij})w^{-}_{ij}\big) \\&+
(f^{+}_{ji}f^{+}_{ki} + f^{+}_{ji}f^{-}_{ki} + f^{-}_{ji}f^{+}_{ki}) 
\big(x_{jk}w^{+}_{jk}+(1-x_{jk})w^{-}_{jk}\big) \\&+
(f^{+}_{kj}f^{+}_{ij} + f^{+}_{kj}f^{-}_{ij} + f^{-}_{kj}f^{+}_{ij}) \big(x_{ki}w^{+}_{ki}+(1-x_{ki})w^{-}_{ki}\big)
\end{align*}
and
\begin{align}\label{eq:g_lp}
C_{LP} =& 
\sum_{i<j} 
\Big( 1- \sum_{k \in V-\{i,j\}} \Pr[A_{ij[k]}^{*}] \Big)
\big((1-x_{ij})w^{-}_{ij}+x_{ij}w^{+}_{ij}\big).
\end{align}
\begin{theorem}\label{thm:QuickCluster}
	\textsc{QuickCluster} is an $\alpha$-approximation algorithm for the weighted correlation clustering problem if the following two conditions hold:
	\begin{enumerate}
		\item For any pair of distinct vertices $i$ and $j$, we have
		$$ 
		\Delta(x_{ij}, w^{-}_{ij}, w^{+}_{ij}) \leq 0 
		$$
		where $\Delta(x_{ij}, w^{-}_{ij}, w^{+}_{ij}) = \big(f^{+}_{ij} w^{-}_{ij} + f^{-}_{ij} w^{+}_{ij}\big) - 
		\alpha \big((1-x_{ij})w^{-}_{ij}+x_{ij}w^{+}_{ij}\big)$.
		\item For any triple of distinct vertices $i$, $j$, and $k$, we have
		$$ 
		\Omega(x_{ij}, x_{jk}, x_{ki}, w^{-}_{ij}, w^{-}_{jk}, w^{-}_{ki}, w^{+}_{ij}, w^{+}_{jk}, w^{+}_{ki}) \leq 0 
		$$
		where $\Omega(x_{ij}, x_{jk}, x_{ki}, w^{-}_{ij}, w^{-}_{jk}, w^{-}_{ki}, w^{+}_{ij}, w^{+}_{jk}, w^{+}_{ki}) = 
		\phi_{ijk} - \alpha \cdot \psi_{ijk}$.
	\end{enumerate}
\end{theorem}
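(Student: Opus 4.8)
The plan is to bound the expected cost of \textsc{QuickCluster} by $\alpha$ times the optimal LP value by comparing the two quantities term by term, using the decompositions already set up in the excerpt. Since $cost(LP)\le cost(OPT)$, it suffices to show $E[cost(QC)]\le \alpha\, cost(LP)$; because the algorithm is randomized, this is exactly the statement that \textsc{QuickCluster} is an $\alpha$-approximation (in expectation). Writing $E[cost(QC)] = E[C_{QC}] + E[UC_{QC}]$ and $cost(LP) = C_{LP} + UC_{LP}$, I would prove the two inequalities $E[UC_{QC}]\le\alpha\,UC_{LP}$ and $E[C_{QC}]\le\alpha\,C_{LP}$ independently and then add them. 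The key point that makes this split clean is that both decompositions use the \emph{same} coefficients $\Pr[A_{ij[k]}]$ and $\Pr[A_{ij[k]}^{*}]$, which depend only on the algorithm's random pivot choices and on $f^{+},f^{-}$, not on the LP solution $x$.

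For the uncontrolled part, by \eqref{eq:b_qcr} and \eqref{eq:b_lp} we have $E[UC_{QC}] = \sum_{i<j<k}\Pr[A_{ij[k]}]\,\phi_{ijk}$ and $UC_{LP} = \sum_{i<j<k}\Pr[A_{ij[k]}]\,\psi_{ijk}$ over the same index set, with common nonnegative weights $\Pr[A_{ij[k]}]\ge 0$. Hence $E[UC_{QC}]\le\alpha\,UC_{LP}$ follows immediately from $\phi_{ijk}-\alpha\,\psi_{ijk}\le 0$ for every triple, which is precisely condition (2), $\Omega(\cdot)\le 0$. One detail to check here is that the collapse of $\sum_{i<j}\sum_{k\in V\setminus\{i,j\}}$ into $\sum_{i<j<k}$ with the symmetric expressions $\phi_{ijk},\psi_{ijk}$ is valid: this uses the equality $\Pr[A_{ij[k]}] = \Pr[A_{jk[i]}] = \Pr[A_{ki[j]}]$ (which holds since the pivot is chosen uniformly at random), so that the three contributions of a fixed triple — one per choice of pivot — regroup into $\phi_{ijk}$ (resp.\ $\psi_{ijk}$) times that common probability.

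For the controlled part, by \eqref{eq:g_qcr} and \eqref{eq:g_lp} we have $E[C_{QC}] = \sum_{i<j}\lambda_{ij}\,(f^{+}_{ij}w^{-}_{ij}+f^{-}_{ij}w^{+}_{ij})$ and $C_{LP} = \sum_{i<j}\lambda_{ij}\,((1-x_{ij})w^{-}_{ij}+x_{ij}w^{+}_{ij})$ with the same coefficient $\lambda_{ij} = 1 - \sum_{k\in V\setminus\{i,j\}}\Pr[A_{ij[k]}^{*}]$. The only point needing justification is $\lambda_{ij}\ge 0$, which holds because $\sum_{k}\Pr[A_{ij[k]}^{*}] = \Pr\big[\bigcup_{k}A_{ij[k]}^{*}\big]\le 1$, using that the events $A_{ij[k]}^{*}$ are pairwise disjoint (as noted in the text). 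Given $\lambda_{ij}\ge 0$, the bound $E[C_{QC}]\le\alpha\,C_{LP}$ follows term by term from $\Delta(x_{ij},w^{-}_{ij},w^{+}_{ij})\le 0$, condition (1).

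Adding the two inequalities yields $E[cost(QC)] = E[C_{QC}]+E[UC_{QC}] \le \alpha(C_{LP}+UC_{LP}) = \alpha\, cost(LP) \le \alpha\, cost(OPT)$, completing the proof. I do not expect a genuine obstacle here: all the structural work — partitioning pair-costs into controlled and uncontrolled contributions and rewriting each so it matches the corresponding part of $cost(LP)$ — has already been carried out, so the argument reduces to a term-by-term comparison plus the two nonnegativity observations $\Pr[A_{ij[k]}]\ge 0$ and $\lambda_{ij}\ge 0$. The real difficulty is postponed to the later sections, where conditions (1) and (2) must actually be verified for the concrete instantiations $f^{-}_{ij}=w^{-}_{ij}$ with $\alpha=3$, and $f^{-}_{ij}=h(w^{-}_{ij})$ with $\alpha=1.6$; those are the computations in which the probability constraints, the triangle inequality, and the specific breakpoints of $h$ are exploited.
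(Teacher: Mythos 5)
Your proposal is correct and matches the paper's proof: both bound $E[C_{QC}]\le\alpha\,C_{LP}$ and $E[UC_{QC}]\le\alpha\,UC_{LP}$ by term-by-term comparison using the shared nonnegative coefficients, then add. Your explicit remarks that $\Pr[A_{ij[k]}]\ge 0$ and $\lambda_{ij}\ge 0$ (from disjointness of the $A^{*}_{ij[k]}$) just make visible what the paper leaves implicit.
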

\begin{proof}
	If condition 1 holds, from equations \ref{eq:g_qcr} and \ref{eq:g_lp}, we have 
    $E\big[C_{QC} \big] \leq \alpha \cdot C_{LP}$.
	If condition 2 holds, from equations \ref{eq:b_qcr} and \ref{eq:b_lp}, we have 
	$E\big[UC_{QC} \big] \leq \alpha \cdot UC_{LP}$. Therefore
	$$
	E\big[ cost(QC) \big ] = 
	E\big[ C_{QC} \big] + E\big[ UC_{QC} \big] \leq 
	\alpha \cdot C_{LP} + \alpha \cdot UC_{LP} =
	\alpha \cdot cost(LP) \leq
	\alpha \cdot cost(OPT) 
	$$\end{proof}

To prove an approximation factor $\alpha$ for \textsc{QuickCluster} using Theorem \ref{thm:QuickCluster}, we need to show that the maximum values of $\Delta$ and $\Omega$ are less than or equal to 0 for that value of $\alpha$.
To achieve this, for the function $\Delta$, we find a subset of its domain that contains at least one of the points that maximizes the function. Then, we show that the maximum of this function on this subset is less than or equal to 0 which implies that the function on its whole domain is also less than or equal to 0. Similarly, we do the same for the function $\Omega$. Restricting the domain helps us finding the maximum value more easily.

If the instance satisfies the probability constraints then in both functions $\Delta$ and $\Omega$ we can omit the $w^+$ arguments and only consider the $w^{-}$ and $x$ arguments. Therefore, in the rest of this analysis we consider these functions as $\Delta(x_{ij}, w^{-}_{ij})$ and $\Omega(x_{ij}, x_{jk}, x_{ki}, w^{-}_{ij}, w^{-}_{jk}, w^{-}_{ki})$.

In $\Delta$, function $f^{-}_{ij}$ is either equal to $w^{-}_{ij}$ or $h(w^{-}_{ij})$. Consequently, for a fixed value of $w^{-}_{ij}$ the function $\Delta$ will be a linear function with a single variable $x_{ij}$, and receives its maximum value when $x_{ij}$ is in $\{0, 1\}$ which are the endpoints of the domain of $x_{ij}$. Hence, instead of the whole domain of $\Delta$, we can find the maximum value of $\Delta$ on 
\begin{align*}
	D(\Delta) = 
	\big\{
	( x_{ij} , w^{-}_{ij}) 
	\, \big| \, 
	x_{ij} \in \{0, 1\} \,\wedge\,
	w^{-}_{ij} \in [0, 1] 
	\big\}.
\end{align*}

Similarly, by fixing the values of $w^{-}_{ij}$, $w^{-}_{jk}$ and $w^{-}_{ki}$ in $\Omega$, it will be a linear function with three variables $x_{ij}$, $x_{jk}$ and $x_{ki}$.
According to the first constraint of LP \ref{LP:CWCC}, 
$(x_{ij},x_{jk},x_{ki}) \in S$ where 
$$S=\big\{ (a,b,c)\,\big|\, a,b,c \in [0, 1] \,\wedge\, a\leq b+c  \,\wedge\, b\leq c+a \,\wedge\, c\leq a+b \big\}.$$
Therefore, $\Omega$ receives its maximum value when $(x_{ij}, x_{jk}, x_{ki})$ is a vertex of the polytope $S$.
The set of these vertices is
$$L = \{ (0,0,0), (1,1,0), (1,0,1), (0,1,1), (1,1,1) \}.$$
Hence, instead of the whole domain of $\Omega$, we can find the maximum value of $\Omega$ on 
\begin{align*}
	D(\Omega) = 
	\big\{
	(x_{ij}, x_{jk}, x_{ki}, w^{-}_{ij}, w^{-}_{jk}, w^{-}_{ki}) \, \big| \,
	(x_{ij}, x_{jk}, x_{ki})\in L \,\wedge\, 
	w^{-}_{ij},w^{-}_{jk},w^{-}_{ki} \in [0,1]
	\big\}.
\end{align*}

\begin{theorem}\label{thm:QuickCluster_probability_constraint}
	\textit{QuickCluster} with $f^{-}_{ij} = w^{-}_{ij}$ is a 3-approximation algorithm for the weighted correlation clustering problem when the input instance satisfies the probability constraints.
\end{theorem}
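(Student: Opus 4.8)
The plan is to apply Theorem~\ref{thm:QuickCluster} with $\alpha = 3$. By the domain‑reduction discussion just above, it suffices to verify condition~1 only on $D(\Delta)$ and condition~2 only on $D(\Omega)$, and since the instance satisfies the probability constraints we may drop the $w^{+}$ arguments, writing $w = w^{-}_{ij}$ (so $f^{-}_{ij} = w$, $f^{+}_{ij} = 1-w$, $w^{+}_{ij} = 1-w$) and, for a triple, $a = w^{-}_{ij}$, $b = w^{-}_{jk}$, $c = w^{-}_{ki}$.

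For condition~1, a direct substitution gives $f^{+}_{ij}w^{-}_{ij} + f^{-}_{ij}w^{+}_{ij} = 2w(1-w)$, so on $D(\Delta)$ one only checks the two endpoints: $x_{ij} = 0$ yields $\Delta = 2w(1-w) - 3w = -w(1+2w) \le 0$, and $x_{ij} = 1$ yields $\Delta = 2w(1-w) - 3(1-w) = (1-w)(2w-3) \le 0$. (In fact any $\alpha \ge 2$ suffices here, so the real content is condition~2.) For condition~2 the key observation I would establish first is an algebraic collapse: using $f^{+}_{uv}+f^{-}_{uv}=1$, each of the three lines defining $\phi_{ijk}$ reduces to the \emph{same} symmetric expression, so that
\begin{align*}
\phi_{ijk} = 3\bigl(a+b+c - 2(ab+ac+bc) + 3abc\bigr),
\end{align*}
and in $\psi_{ijk}$ each of the three groups of terms has coefficient $1 - f^{-}_{(\cdot)}f^{-}_{(\cdot)}$ (for the pair $ij$ this is $1-bc$, and cyclically) times the cost factor $x_{ij}w^{+}_{ij}+(1-x_{ij})w^{-}_{ij}$, which is $a$ when $x_{ij}=0$ and $1-a$ when $x_{ij}=1$.

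Once $\phi_{ijk}$ and $\psi_{ijk}$ are in this form, for each fixed choice of $(x_{ij},x_{jk},x_{ki}) \in L$ the function $\Omega = \phi_{ijk} - 3\psi_{ijk}$ is \emph{multilinear} in $(a,b,c)$ on $[0,1]^3$, hence attains its maximum at a vertex of the cube, and condition~2 becomes a finite check. Since $\phi_{ijk}$ is symmetric and the coefficient pattern of $\psi_{ijk}$ is cyclically symmetric, the five elements of $L$ collapse to the three representatives $(0,0,0)$, $(1,1,0)$, $(1,1,1)$: for $(0,0,0)$ one gets $\Omega = -6(ab+ac+bc)+18abc \le 0$ since $ab,ac,bc \ge abc$; for $(1,1,0)$ one gets $\Omega = 6a+6b-6ab-3ac-3bc+6abc-6$; and for $(1,1,1)$ one gets $\Omega = 6(a+b+c)-3(ab+ac+bc)-9$, each of which is $\le 0$ at all eight cube vertices (with equality at several, which is why $\alpha = 3$ is the best this argument gives). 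Plugging both verified conditions into Theorem~\ref{thm:QuickCluster} yields the claim. The only genuine obstacle is the algebraic collapse of $\phi_{ijk}$ and $\psi_{ijk}$ into the symmetric/multilinear forms above together with tracking the symmetry that cuts $L$ down to three cases; after that the maximization is routine, though care is needed because the inequalities are tight on the boundary of the cube.
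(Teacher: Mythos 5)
Your proposal is correct and follows essentially the same approach as the paper: apply Theorem~\ref{thm:QuickCluster} with $\alpha=3$, check condition~1 at the endpoints $x_{ij}\in\{0,1\}$, and check condition~2 on $L\times[0,1]^3$ using multilinearity. Your expansion $\phi_{ijk}=3\big(a+b+c-2(ab+ac+bc)+3abc\big)$ agrees with the paper's factored form $3(w^{-}_{ij}w^{+}_{jk}w^{+}_{ki}+w^{+}_{ij}w^{-}_{jk}w^{+}_{ki}+w^{+}_{ij}w^{+}_{jk}w^{-}_{ki})$, and your cyclic-symmetry reduction of $L$ to three representatives plus the cube-vertex check is a valid (slightly more explicit) way of establishing the paper's asserted inequality $\psi_{ijk}\geq\phi_{ijk}/3$ on $D(\Omega)$.
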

\begin{proof}
We show that both conditions of Theorem \ref{thm:QuickCluster} hold for $\alpha=3$.
As discussed above, for the first condition it is enough to prove that for $\alpha=3$, $\Delta\leq0$ on domain $D(\Delta)$. In this domain, $x_{ij} \in \{0, 1\}$. For $x_{ij}=0$ we have 
\begin{align*}
\Delta(0, w^{-}_{ij}) &= 
f^{+}_{ij} w^{-}_{ij} + f^{-}_{ij} w^{+}_{ij}- 
3 w^{-}_{ij} \\ &=
w^{+}_{ij} w^{-}_{ij} + w^{-}_{ij} w^{+}_{ij} - 
3 w^{-}_{ij} \\ &=
2w^{+}_{ij} w^{-}_{ij} - 3 w^{-}_{ij} \\ &\leq
2w^{-}_{ij} - 3 w^{-}_{ij}
&&\text{(since $0 \leq w^{+}_{ij} \leq 1$.)} \\ &\leq 0.
\end{align*}
Similarly, it can be proved that when $x_{ij}=1$, we have $\Delta(1, w^{-}_{ij}) \leq 0$ which means that the first condition is valid for $\alpha=3$. 

For the second condition, note that according to the definition of $f$ and $w$ we have $f^{+}_{ij}=w^{+}_{ij}$, $f^{-}_{ij}=w^{-}_{ij}$, $w^{+}_{ij}=w^{+}_{ji}$ and $w^{-}_{ij}=w^{-}_{ji}$. Then, by simple algebraic operations, we have
\begin{align}\label{ineq:phi_p}
	\phi_{ijk} = 
	3(
	w^{-}_{ij}w^{+}_{jk}w^{+}_{ki}+
	w^{+}_{ij}w^{-}_{jk}w^{+}_{ki}+
	w^{+}_{ij}w^{+}_{jk}w^{-}_{ki}
	)
\end{align}
and
\begin{align*}
\psi_{ijk} =& 
    (w^{+}_{ik}w^{+}_{jk} + w^{+}_{ik}w^{-}_{jk} + w^{-}_{ik}w^{+}_{jk}) 
    \big(x_{ij}w^{+}_{ij}+(1-x_{ij})w^{-}_{ij}\big) \\&+
    (w^{+}_{ji}w^{+}_{ki} + w^{+}_{ji}w^{-}_{ki} + w^{-}_{ji}w^{+}_{ki}) 
    \big(x_{jk}w^{+}_{jk}+(1-x_{jk})w^{-}_{jk}\big) \\&+
    (w^{+}_{kj}w^{+}_{ij} + w^{+}_{kj}w^{-}_{ij} + w^{-}_{kj}w^{+}_{ij}) \big(x_{ki}w^{+}_{ki}+(1-x_{ki})w^{-}_{ki}\big).
\end{align*}
When $(x_{ij}, x_{jk}, x_{ki})\in L$, we have $\psi_{ijk} \geq 
w^{-}_{ij}w^{+}_{jk}w^{+}_{ki}+
w^{+}_{ij}w^{-}_{jk}w^{+}_{ki}+
w^{+}_{ij}w^{+}_{jk}w^{-}_{ki}$ which implies that on domain $D(\Omega)$, $\phi_{ijk} - 3 \cdot \psi_{ijk} \leq 0 $. Therefore, the second condition is also valid for $\alpha=3$.
\end{proof}

\begin{theorem}\label{thm:probability_and_triangle_constraint}
	\textit{QuickCluster} with $f^{-}_{ij} = h(w^{-}_{ij})$ is a 1.6-approximation algorithm for the weighted correlation clustering problem when the input instance satisfies both probability and triangle inequality constraints.
\end{theorem}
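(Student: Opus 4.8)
The plan is to apply Theorem~\ref{thm:QuickCluster} with $\alpha = 1.6$ and $f^{-}_{ij} = h(w^{-}_{ij})$, i.e.\ to verify that both conditions $\Delta \le 0$ and $\Omega \le 0$ hold, just as was done for $\alpha = 3$ in Theorem~\ref{thm:QuickCluster_probability_constraint}. As there, the probability constraints let us substitute $w^{+}_{ij} = 1 - w^{-}_{ij}$ and $f^{+}_{ij} = 1 - h(w^{-}_{ij})$, so that $\Delta$ becomes a function of $(x_{ij}, w^{-}_{ij})$ and $\Omega$ a function of $(x_{ij}, x_{jk}, x_{ki}, w^{-}_{ij}, w^{-}_{jk}, w^{-}_{ki})$, and it suffices to bound them on the reduced domains $D(\Delta)$ and $D(\Omega)$.

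\emph{First condition.} On $D(\Delta)$ we have $x_{ij} \in \{0,1\}$. For each of the two values we split the range of $w := w^{-}_{ij}$ into $I_1, I_2, I_3$. On $I_1$ and $I_3$, $h$ is constant, so $\Delta(\cdot, w)$ is affine in $w$ and it is enough to check the two endpoints of the interval. On $I_2$, $h(w) = \tfrac{25}{7}w - \tfrac54$ is affine, and then $\Delta(\cdot, w)$ is a downward-opening (concave) quadratic in $w$ — the cross terms $h(w)\,w$ produce a negative $w^2$ coefficient — so it suffices to check the two endpoints of $I_2$ together with the vertex of the parabola, which falls inside $I_2$. These are all short numerical checks, and the breakpoints $0.35$ and $0.63$ are precisely what keeps these quadratics non-positive; the first condition leaves substantial slack, so it is not the binding one.

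\emph{Second condition.} This is the crux: $\Omega = \phi_{ijk} - 1.6\,\psi_{ijk} \le 0$. The first point is that, because the \emph{instance} satisfies the triangle inequality, the triple $(w^{-}_{ij}, w^{-}_{jk}, w^{-}_{ki})$ always lies in the polytope $T$, so we may add this constraint to $D(\Omega)$; this restriction is genuinely needed, since without it $\Omega$ is positive — for instance at $(x_{ij},x_{jk},x_{ki}) = (0,0,0)$ and $(w^{-}_{ij}, w^{-}_{jk}, w^{-}_{ki}) = (1,0,0)$ one computes $\phi_{ijk} = 3$ and $\psi_{ijk} = 1$, so $\Omega = 1.4$. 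Next, $\phi_{ijk}$ and $\psi_{ijk}$ are invariant under the simultaneous action of $S_3$ on the vertex labels $i,j,k$ and on the entries of $L$, so it suffices to treat the three $L$-orbit representatives $(0,0,0)$, $(0,1,1)$, $(1,1,1)$ and, within each, to assume $w^{-}_{ij} \ge w^{-}_{jk} \ge w^{-}_{ki}$ (so the only triangle constraint that can be active is $w^{-}_{ij} \le w^{-}_{jk} + w^{-}_{ki}$). Finally, because $h$ is affine on each of $I_1, I_2, I_3$ and $1 - w^{-}$ is affine, once each coordinate $w^{-}_{ab}$ is confined to a single interval $I_{\ell}$ the functions $\phi_{ijk}$, $\psi_{ijk}$, and hence $\Omega$, are multilinear in $(w^{-}_{ij}, w^{-}_{jk}, w^{-}_{ki})$. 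Splitting on the (at most ten, given the ordering) interval patterns and discarding those incompatible with $T$, on each surviving piece $\Omega$ is a multilinear function over a small polytope — a box when the piece lies in the interior of $T$, or a box intersected with the halfspace $w^{-}_{ij} \le w^{-}_{jk} + w^{-}_{ki}$ otherwise — and its maximum is attained among finitely many explicit candidates: the vertices of the piece, plus, on the pieces meeting the facet $w^{-}_{ij} = w^{-}_{jk} + w^{-}_{ki}$, the critical points obtained by substituting that equality into $\Omega$ and setting a partial derivative of the resulting low-degree polynomial to zero. Evaluating $\Omega$ at each candidate and checking it is $\le 0$ completes the argument.

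\emph{Main obstacle.} The difficulty lies entirely in this last, finite but sizeable, case analysis: one must enumerate the pieces correctly — several are clipped by a triangle facet and hence are not boxes, so checking box corners alone does not suffice and the facet must be handled separately — keep track of the $S_3$-reductions without dropping cases, and then carry out the arithmetic; this is the step one would naturally discharge with a computer-assisted check. A secondary subtlety is that the constants defining $h$ are tightly tuned (they appear to have been obtained by optimizing over piecewise-linear rounding functions) precisely so that the maximum of $\Omega$ over the reduced domain equals $0$ at $\alpha = 1.6$, so the bound is essentially tight and the verification has little slack.
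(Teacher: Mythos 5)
Your overall architecture is the same as the paper's: invoke Theorem~\ref{thm:QuickCluster} with $\alpha = 1.6$, reduce $\Delta$ and $\Omega$ to functions of $x$ and $w^{-}$ via the probability constraints, restrict the domain of $w^{-}$ via the instance's triangle inequality, exploit piecewise (multi)linearity of $h$ to push the search to finitely many candidates, and check. Your treatment of the first condition (concave quadratic on $I_2$, affine on $I_1, I_3$, check endpoints and vertex) matches the paper's computation, and your observation that $\Omega((0,0,0),(1,0,0)) = 3 - 1.6 = 1.4 > 0$ is a correct and useful sanity check that the triangle inequality restriction is genuinely needed.

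However, the symmetry reduction you use for the second condition is too aggressive and leaves a real gap. You propose to restrict simultaneously to the three $x$-orbit representatives $(0,0,0)$, $(0,1,1)$, $(1,1,1)$ \emph{and} to fully ordered weights $w^{-}_{ij} \geq w^{-}_{jk} \geq w^{-}_{ki}$. But $S_3$ acts \emph{diagonally} on the pair $\big((x_{ij},x_{jk},x_{ki}),(w^{-}_{ij},w^{-}_{jk},w^{-}_{ki})\big)$: once you have used a permutation to bring the $x$-triple $(1,1,0)$ or $(1,0,1)$ to your chosen representative $(0,1,1)$, the only remaining freedom is the stabilizer of $(0,1,1)$, which is a single transposition swapping the $w^{-}_{jk}$ and $w^{-}_{ki}$ slots. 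That lets you order two of the three weights, not all three. Concretely, the orbit of $\big((1,1,0),(0.5,0.4,0.3)\big)$ never enters your reduced domain: the two permutations sending $(1,1,0)$ to $(0,1,1)$ produce the $w$-triples $(0.4,0.3,0.5)$ and $(0.4,0.5,0.3)$, neither of which is sorted, and the two that sort $w$ send $x$ to $(1,1,0)$, which you discarded. The paper's reduction is the correct, weaker one: after using the cyclic subgroup to make $w^{-}_{ki}$ the maximum, the single remaining transposition is spent ordering $x_{ij} \leq x_{jk}$, which yields the four representatives $L_T = \{(0,0,0),(1,1,0),(0,1,1),(1,1,1)\}$ with $w^{-}_{ij}, w^{-}_{jk}$ left unordered. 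Your plan would therefore fail to cover part of the domain; to repair it you should either keep all four $L_T$ representatives and only order two of the weights, or keep the three $x$-representatives and give up the weight ordering entirely. Once that is fixed, the remaining case analysis you describe (box corners plus critical points on the tight facet $w^{-}_{ki}=w^{-}_{ij}+w^{-}_{jk}$) agrees with what the paper carries out in Table~\ref{table:Omega} and Appendix~\ref{AppendixA}.
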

\begin{proof}
We show that both conditions of Theorem \ref{thm:QuickCluster} hold for $\alpha=1.6$ when both probability and triangle inequality constraints are satisfied. Then, according to Theorem \ref{thm:QuickCluster}, \textit{QuickCluster} is a 1.6-approximation algorithm for this problem.

For the first condition, we show that 
$\Delta\leq0$ on domain $D(\Delta)$ for $\alpha=1.6$:
\begin{itemize}
\item If $w^{-}_{ij} \in I_1$:
\begin{align*}
    \Delta(0, w^{-}_{ij})= - \frac{3}{5}w^{-}_{ij} \leq 0 \quad \textrm{and} \quad
    \Delta(1, w^{-}_{ij})= \frac{13}{5}w^{-}_{ij} - \frac{8}{5} \leq - \frac{69}{100}
\end{align*}
\item If $w^{-}_{ij} \in I_2$:
\begin{align*}
&\Delta(0, w^{-}_{ij})= - \frac{50}{7}(w^{-}_{ij})^{2} + \frac{383}{70}w^{-}_{ij} - \frac{5}{4} 
\leq - \frac{28311}{140000} \\
&\Delta(1, w^{-}_{ij})= - \frac{50}{7}(w^{-}_{ij})^{2} + \frac{607}{70}w^{-}_{ij} - \frac{57}{20}
\leq - \frac{30551}{140000}
\end{align*}
\item If $w^{-}_{ij} \in I_3$:
\begin{align*}
    \Delta(0, w^{-}_{ij})= 1 - \frac{13}{5} w^{-}_{ij} \leq - \frac{319}{500} \quad \textrm{and} \quad
    \Delta(1, w^{-}_{ij})= \frac{3}{5} w^{-}_{ij} - \frac{3}{5} \leq 0
\end{align*}
\end{itemize}

For the second condition, note that $(w^{-}_{ij}, w^{-}_{jk}, w^{-}_{ki}) \in S$ and the rounding function $h$ is a linear piece-wise function on intervals $I_1$, $I_2$ and $I_3$. Then, the partial derivative functions  
$\frac{\partial \Omega}{\partial w^{-}_{ij}}$,
$\frac{\partial \Omega}{\partial w^{-}_{ij}}$ and 
$\frac{\partial \Omega}{\partial w^{-}_{ij}}$
are step piece-wise functions having a constant value in each piece.
Therefore, $\Omega$ obtains its maximum value when the triangle inequality on $(w^{-}_{ij}, w^{-}_{jk}, w^{-}_{ki})$ is tight, or all parameters 
$w^{-}_{ij}$, $w^{-}_{jk}$ and $w^{-}_{ki}$ are in 
$\{ 0, 0.35, 0.63, 1\}$, the set of endpoints of $I_1$, $I_2$ and $I_3$.

On the other hand, due to the symmetry of $\Omega$, for any $x_1, x_2, x_3, w_1, w_2, w_3 \in[0, 1]$ we have
\begin{enumerate}
\item 
	$\Omega(x_1,x_2,x_3, w_1, w_2, w_3)=
	\Omega(x_3,x_1,x_2, w_3, w_1, w_2)=  
	\Omega(x_2,x_3,x_1, w_2, w_3, w_1).$
\item
	$\Omega(x_1,x_2,x_3, w_1, w_2, w_3)=
	\Omega(x_2,x_1,x_3, w_2, w_1, w_3).$
\end{enumerate}
The first equalities mean that to check the maximum value of $\Omega$, we only need to consider those points at which $w^{-}_{ki}$ is greater than or equal to both $w^{-}_{ij}$ and $w^{-}_{jk}$, and 
the second equality limits us to only consider those points at which
$x_{ij} \leq x_{jk}$.
So, when the instance satisfies both the probability and the triangle inequality constraints, it is enough to find the maximum value of $\Omega$ on 
\begin{align*}
D_T(\Omega) = 
\big\{
	(x_{ij}, x_{jk}, x_{ki}&, w^{-}_{ij}, w^{-}_{jk}, w^{-}_{ki})
	\, \big| \,
	(x_{ij}, x_{jk}, x_{ki}) \in L_T 
	\, \wedge \,
	(w^{-}_{ij}, w^{-}_{jk}, w^{-}_{ki}) \in A\,\vee\,B
\big\}
\end{align*}
where
\begin{align*}
&L_T=\big\{ (0,0,0), (1,1,0), (0,1,1), (1,1,1) \big\} \\
&A= \big\{ (w^{-}_{ij}, w^{-}_{jk}, w^{-}_{ki}) \in T \, \big| \, 
w^{-}_{ij}, w^{-}_{jk}, w^{-}_{ki} \in \{ 0, 0.35, 0.63, 1\} \big\}\\
&B=\big\{ (w^{-}_{ij}, w^{-}_{jk}, w^{-}_{ki}) \in T \, \big| \, 
w^{-}_{ij} + w^{-}_{jk} = w^{-}_{ki} \big\}
\end{align*}

Tab.~\ref{table:Omega} shows the values of $\Omega$ when 
$(x_{ij}, x_{jk}, x_{ki}) \in L_T$ 
and
$(w^{-}_{ij}, w^{-}_{jk}, w^{-}_{ki}) \in A$. 
The maximum value of $\Omega$ when 
$(x_{ij}, x_{jk}, x_{ki}) \in L_T$ 
and
$(w^{-}_{ij}, w^{-}_{jk}, w^{-}_{ki}) \in B$ has been computed in Appendix~\ref{AppendixA}. 
Since $\Omega$ is a piece-wise function, the maximum in each part must be calculated separately, which leads to a lengthy proof in Appendix~\ref{AppendixA}. As an informal but short proof, in Fig.~\ref{fig:omega} we showed the shape of $\Omega$ in that domain. 
The content of the table and detailed computations in this appendix show that $\Omega$ is less than or equal to zero in all its domain. Hence, the second condition of Theorem~\ref{thm:QuickCluster} for $\alpha = 1.6$ is also true.
\begin{figure}[h]
	\centering
	\includegraphics[width=0.9\textwidth]{./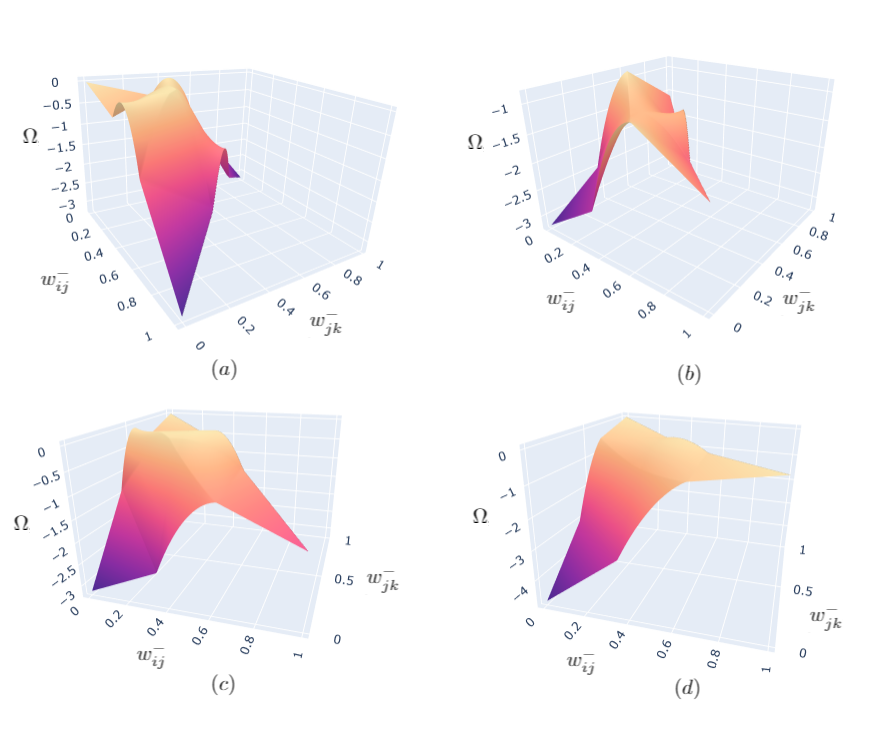}
	\caption{
	The shape of $\Omega$ for 
	$(a) $ $(x_{ij}, x_{jk}, x_{ki}) = (0,0,0)$,
	$(b) $ $(x_{ij}, x_{jk}, x_{ki}) = (1,1,0)$,
	$(c) $ $(x_{ij}, x_{jk}, x_{ki}) = (0,1,1)$, and 
	$(d) $ $(x_{ij}, x_{jk}, x_{ki}) = (1,1,1)$
	when $w^{-}_{ij} + w^{-}_{jk} = w^{-}_{ki}$.
	}
	\label{fig:omega}
\end{figure}
\end{proof}

\section{A Lower Bound for $\alpha$}

In this section, we want to discuss how efficient is the performance of the rounding functions we used for our algorithm. 
In particular, we want to show that if the rounding function $f_{ab}$ can be defined by $g(w^{-}_{ab})$ where $g: [0,1] \rightarrow [0,1]$ for any $a,b \in V$, then the second condition of Theorem \ref{alg:QuickCluster} can not hold for $\alpha<3$ if instances satisfy probability constraints, and $\alpha\leq1.55$ if both probability and triangle inequality be satisfied.

\begin{lemma}\label{lemma:limit1}
If the rounding function $f_{ab}$ be defined by $g(w^{-}_{ab})$ where 
$g: [0,1] \rightarrow [0,1]$, 
then $g(0) = 0$ and $g(1) = 1$, 
otherwise the second condition of Theorem \ref{thm:QuickCluster} does not hold for any value of $\alpha$.
\end{lemma}
\begin{proof}
If $f_{ab}=g(w^{-}_{ab})$ and 
$(x_{ij}, x_{jk}, x_{ki}, w^{-}_{ij}, w^{-}_{jk}, w^{-}_{ki}) = (0,0,0,0,0,0)$ 
then 
$\phi_{ijk} = 6 g(0) \left(1 - g(0)\right)$ and 
$\psi_{ijk} = 0$. If the second condition of Theorem \ref{thm:QuickCluster} holds then 
$ \Omega(0, 0, 0, 0, 0, 0) = 6 g(0) \left(1 - g(0)\right) \leq 0 $, and it means that $g(0)=0$.
Similarly  
$\Omega(1, 1, 1, 1, 1, 1) = 3 \left(1-g(1)\right)^{2} \leq 0$
if and only if $g(1)=1$.
\end{proof}

\begin{theorem}\label{thm:limit1}
For any rounding function $f_{ab}$ that can be defined by $g(w^{-}_{ab})$ where 
$g: [0,1] \rightarrow [0,1]$, 
the second condition of Theorem \ref{thm:QuickCluster} can not hold for any 
$\alpha<3$.
\end{theorem}
\begin{proof}
If $f_{ab}=g(w^{-}_{ab})$ and 
$(x_{ij}, x_{jk}, x_{ki}, w^{-}_{ij}, w^{-}_{jk}, w^{-}_{ki}) = (0, 0, 0, 1, 0, 0)$ 
then by Lemma \ref{lemma:limit1},
$\phi_{ijk} = 3$ and $\psi_{ijk} = 1$.
If the second condition of Theorem \ref{thm:QuickCluster} holds then 
$\Omega(0, 0, 0, 1, 0, 0) = 3 - \alpha \leq 0$. Hence $\alpha\geq 3$.
\end{proof}

Theorem \ref{thm:limit1} shows that the choice of 
$f_{ab}=w^{-}_{ab}$
is optimal for the instances that satisfy probability constraints. But the points we used in the proof of Theorem \ref{thm:limit1} do not satisfy triangle inequality. In the following we will find a lower bound for the instances that satisfy both probability and triangle inequality constraints.

\begin{lemma}\label{lemma:limit2}
 For any value of $\alpha$ and $f_{ij}=g(w^{-}_{ij})$ that satisfies the second condition of Theorem \ref{thm:QuickCluster}, and any value of $\beta$ such that $\alpha \leq \beta$, these following four cases hold.
\begin{enumerate}
\item If 
$g(x) \leq -x+1 $ 
and 
$ - \beta x + x^{2} - x + 1 \geq 0 $ then
$g(x) \leq - x - \sqrt{- \beta x + x^{2} - x + 1} + 1$
\item If $g(x) \leq -x+1 $
and 
$ \beta x - \beta + x^{2} - x + 1 \geq 0 $ then
$g(x) \leq - x - \sqrt{\beta x - \beta + x^{2} - x + 1} + 1$
\item If $ g(x) \geq -x+1 $
and 
$ - \beta x + x^{2} - x + 1 \geq 0 $ 
then
$g(x) \geq - x + \sqrt{- \beta x + x^{2} - x + 1} + 1$
\item If $ g(x) \geq -x+1 $
and 
$ \beta x - \beta + x^{2} - x + 1 \geq 0 $ then
$g(x) \geq - x + \sqrt{\beta x - \beta + x^{2} - x + 1} + 1$
\end{enumerate}
\end{lemma}
\begin{proof}
By Lemma \ref{lemma:limit1}, 
$
\Omega(0, 0, 0, x, x, 0) = - 2 \alpha x - 4 x g(x) + 2 x - 2 g(x)^{2} + 4 g(x)
$.
So, if the second condition of Theorem \ref{thm:QuickCluster} holds, then
\begin{align*}
    - 2 \alpha x - 4 x g(x) + 2 x - 2 g(x)^{2} + 4 g(x) &\leq 0 \\
    - \alpha x - 2 x g(x) + x - g(x)^{2} + 2 g(x) &\leq 0 \\
    - \alpha x &\leq g(x)^2 + 2 x g(x) -2 g(x) -x \\
    - \alpha x + [x^{2} - x + 1] &\leq g(x)^2 + 2 x g(x) -2 g(x) -x + [x^{2} -x + 1] \\
    - \alpha x + x^{2} - x + 1 &\leq (- g(x) - x + 1)^2
\end{align*}
Since $\beta \geq \alpha$, if $ - \beta x + x^{2} - x + 1 \geq 0 $ then $ - \alpha x + x^{2} - x + 1 \geq 0 $. So
\begin{itemize}
\item 
If 
$(- g(x) - x + 1)\geq0$ 
then 
$g(x) \leq - x - \sqrt{- \alpha x + x^{2} - x + 1} + 1$. For any fixed $x\in[0, 1]$, 
$L_1(\alpha)=- x - \sqrt{- \alpha x + x^{2} - x + 1} + 1$ is a non-decreasing function. So, we have the first case.
\item
If $(- g(x) - x + 1)\leq0$, then 
$g(x) \geq - x + \sqrt{- \alpha x + x^{2} - x + 1} + 1$. For any fixed $x\in[0, 1]$, 
$L_3(\alpha)=- x + \sqrt{- \alpha x + x^{2} - x + 1} + 1$ is a non-increasing function. So, we have the third case.
\end{itemize}

By Lemma \ref{lemma:limit1}, 
$
\Omega(1, 1, 0, x, x, 0) = 2 \alpha x - 2 \alpha - 2 g(x)^{2} - 4 x g(x) + 4 g(x) + 2 x
$.
So, when the second condition of Theorem \ref{thm:QuickCluster} holds we have
\begin{align*}
    2 \alpha x - 2 \alpha - 2 g(x)^{2} - 4 x g(x) + 4 g(x) + 2 x &\leq 0 \\
    \alpha x - \alpha - g(x)^{2} - 2 x g(x) + 2 g(x) + x &\leq 0 \\
    \alpha x - \alpha + x &\leq g(x)^{2} + 2 x g(x) -2 g(x)\\
    \alpha x - \alpha + x + [x^2 -2x +1] &\leq g(x)^{2} + 2 x g(x) -2 g(x) + [x^2 -2x +1]\\
    \alpha x - \alpha + x^{2} - x + 1 &\leq g(x)^2 + 2xg(x) -2g(x) + x^2 -2x +1\\
    \alpha x - \alpha + x^{2} - x + 1 &\leq (g(x) + x - 1)^2
\end{align*}
Since $\beta \geq \alpha$ and $x \in [0,1]$, if 
$ \beta x - \beta + x^{2} - x + 1 \geq 0 $ then 
$ \alpha x - \alpha + x^{2} - x + 1 \geq 0 $. So
\begin{itemize}
\item If  $(g(x) + x - 1) \leq 0$, then 
$g(x) \leq - x - \sqrt{\alpha x - \alpha + x^{2} - x + 1} + 1$. For any fixed $x\in[0, 1]$, 
$L_2(\alpha)=- x - \sqrt{\alpha x - \alpha + x^{2} - x + 1} + 1$ is a non-decreasing function. So, we have the third case.
\item If $(g(x) + x - 1) \geq 0$, then
$g(x) \geq - x + \sqrt{\alpha x - \alpha + x^{2} - x + 1} + 1$. For any fixed $x\in[0, 1]$, 
$L_4(\alpha)=- x + \sqrt{\alpha x - \alpha + x^{2} - x + 1} + 1$ is a non-increasing function. So, we have the fourth case.
\end{itemize}\end{proof}

Fig.~\ref{fig:lemma_5} (a) shows the regions which are restricted by Lemma~\ref{lemma:limit2} if $\alpha=1.55$. It means that any function $g$ that satisfies the second condition of Theorem~\ref{thm:QuickCluster} should not pass through the red regions. 
Fig.~\ref{fig:lemma_5} (b) shows the restricted regions for $\alpha=1.6$ and the shape of the function $h$. It can be seen that this function does not pass through the restricted regions.
\begin{figure}[h]
	\centering
	\includegraphics[width=\textwidth]{./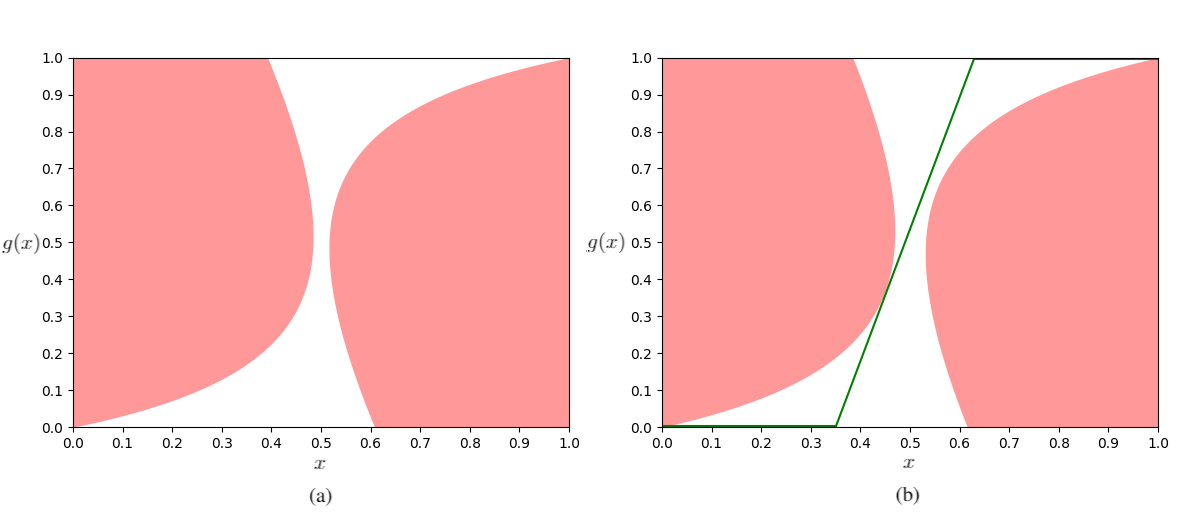}
	\caption{
	The red regions are those restricted by Lemma \ref{lemma:limit2} for 
	(a) $\alpha=1.55$ and 
	(b) $\alpha=1.6$. The green lines are the shape of $h$.
	}
	\label{fig:lemma_5}
\end{figure}

\begin{lemma}\label{lemma:limit3}
If $f_{ij}=g(w^{-}_{ij})$ satisfies the second condition of Theorem \ref{thm:QuickCluster}, then $g(0.44) < 0.293$ for any $\alpha\leq1.55$.
\end{lemma}
\begin{proof}
We define 
$\gamma_1(z)$ as $\Omega(0, 0, 0, 0.21, 0.23, 0.44)$
when $\alpha=z$. According to the definition of $\Omega$, we can conclude that $\gamma_1(z)$ is a decreasing function.
So, for fixed values of $g(0.21)$, $g(0.23)$, $g(0.44)$, and $\alpha\leq1.55$, we have
\begin{align}\label{eq:l2}
\gamma_1(\alpha) \geq& \gamma_1(1.55).
\end{align}
We can consider $g(0.21)$, $g(0.23)$, and $g(0.44)$ as the variables of $\gamma_1(1.55)$. 
If the second condition of Theorem \ref{thm:QuickCluster} holds for an $\alpha\leq1.55$ then 
$\gamma_1(\alpha)\leq0$. So by inequality \ref{eq:l2}, we have
\begin{align*}
\gamma_1(1.55) =& 
0.002 g(0.21) g(0.23) 
- 0.9535 g(0.21) g(0.44) 
+ 0.66 g(0.21) 
- 1.0445 g(0.23) g(0.44) \\&
+ 0.7 g(0.23) 
+ 1.12 g(0.44) 
- 0.4840 \leq 0.
\end{align*}
As a result
\begin{align}\label{eq:l4}
g(0.44) \leq 
\frac{4 \left(g(0.21) g(0.23) + 330 g(0.21) + 350 g(0.23) - 242\right)}
{1907 g(0.21) + 2089 g(0.23) - 2240}.
\end{align}
Let $K_1$ be the right side of inequality \ref{eq:l4}. Then
$$
\frac{\partial K_1}{\partial g(0.21)} = 
\frac{4 \left(2089 g(0.23)^{2} + 19680 g(0.23) - 277706\right)}
{(1907 g(0.21) + 2089 g(0.23) - 2240)^2}
$$
and
$$
\frac{\partial K_1}{\partial g(0.23)} = 
\frac{4 \left(1907 g(0.21)^{2} - 24160 g(0.21) - 278462\right)}
{(1907 g(0.21) + 2089 g(0.23) - 2240)^2}.
$$

By Lemma \ref{lemma:limit2}, 
$g(0.21) \in [0, 0.79-\sqrt{0.5086}]$ and 
$g(0.23) \in [0, 0.77- 2\sqrt{0.1166}]$. 
So, it can be verified easily that 
$\frac{\partial K_1}{\partial g(0.21)}<0$ and $\frac{\partial K_1}{\partial g(0.23)}<0$. 
Thus, $K_1$ is maximized when 
$g(0.21) = 0$ and $g(0.23) = 0$. The value of $K_1$ is less than $0.44$ on these points. 
So, $g(0.44) < 0.44$, and by the first case of Lemma~\ref{lemma:limit2}, we can conclude that 
$g(0.44) \leq 0.56 - 2\sqrt{0.0179} < 0.293$
for any $\alpha\leq1.55$.
\end{proof}

\begin{lemma}\label{lemma:limit4}
If $f_{ij}=g(w^{-}_{ij})$ satisfies the second condition of Theorem \ref{thm:QuickCluster}, then $g(0.44) > 0.309$ for any $\alpha\leq1.55$.
\end{lemma}
\begin{proof}
We define 
$\gamma_2(z)$ as $\Omega(1, 0, 1, 0.44, 0.28, 0.72)$ 
when $\alpha=z$. Similar to the proof of Lemma~\ref{lemma:limit3}, we have
\begin{align}\label{eq:l3}
\gamma_2(\alpha) \geq& \gamma_2(1.55).
\end{align}
If the second condition of Theorem \ref{thm:QuickCluster} holds for an $\alpha\leq1.55$, then 
$\gamma_2(\alpha)\leq0$. So by inequality \ref{eq:l3}, we have
\begin{align*}
\gamma_2(1.55) =& 
0.594 g(0.28) g(0.44) 
+ 0.188 g(0.28) g(0.72) 
- 0.320 g(0.28) \\&
- 0.726 g(0.44) g(0.72) 
+ 0.560 g(0.72) 
- 0.296 \leq 0.
\end{align*}
As a result
\begin{align}\label{eq:l5}
g(0.44) \geq
\frac{2 \left(- 47 g(0.28) g(0.72) + 80 g(0.28) - 140 g(0.72) + 74\right)}
{33 \cdot \left(9 g(0.28) - 11 g(0.72)\right)}.
\end{align}
Let $K_2$ be the right side of the inequality~\ref{eq:l5}. Then
$$
\frac{\partial K_2}{\partial g(0.28)} = 
\frac{2 \cdot \left(517 g(0.72)^{2} + 380 g(0.72) - 666\right)}
{33 \left(9 g(0.28) - 11 g(0.72)\right)^2}
$$
and
$$
\frac{\partial K_2}{\partial g(0.72)} = 
\frac{2 \left(- 423 g(0.28)^{2} - 380 g(0.28) + 814\right)}
{33 \left(9 g(0.28) - 11 g(0.72)\right)^2}.
$$
By Lemma \ref{lemma:limit2}, 
$g(0.28) \in [0, 0.72 - 2 \sqrt{0.0911}]$ and 
$g(0.72) \in [0.28 + 2 \sqrt{0.0911}, 1]$. So, it can be verified easily that 
$\frac{\partial K_2}{\partial g(0.28)}>0$ and $\frac{\partial K_2}{\partial g(0.72)}>0$. 
Thus $K_2$ is minimized when 
$g(0.28) = 0$ and 
$g(0.72) = 0.28 + 2 \sqrt{0.0911}$. The value of $K_2$ is greater than $0.309$ on these points. 
Therefore, if $f_{ij}=g(w^{-}_{ij})$ satisfies the second condition of Theorem \ref{thm:QuickCluster}, then $g(0.44) > 0.309$ for any $\alpha\leq1.55$.
\end{proof}

The result of Lemma~\ref{lemma:limit3} and Lemma~\ref{lemma:limit4} is that no function $g$ can satisfy the second condition of Theorem~\ref{thm:QuickCluster} for any $\alpha \leq 1.55$. Our presented function can support $\alpha = 1.6$ it means it is almost optimal.
Note that for the presented function $h$, we limited ourselves to piece-wise linear functions, in order to retain the required calculations of the second condition of the main theorem simple.

\section{Conclusion}
In this paper, we proposed an improved combinatorial approximation algorithm for the well-known weighted correlation clustering problem.
The approximation factor of our algorithm is 3 when the input instance satisfies the probability constraints and 1.6 when both the probability and the triangle inequality constraints are satisfied which are smaller than previous results. Although the approximation factors of our algorithm are slightly greater than the best-known LP-based algorithms, its smaller running time makes it superior and practical for real applications.

\bibliographystyle{abbrvnat}
\bibliography{sample-dmtcs}
\label{sec:biblio}

\pagebreak

\appendix

\section{Appendix: Complementary data and computations omitted in the proof of Theorem~\ref{thm:probability_and_triangle_constraint}}
\label{AppendixA}

Tabel \ref{table:Omega} shows the values of $\Omega$ when 
$(x_{ij}, x_{jk}, x_{ki}) \in L_T$
and 
$(w^{-}_{ij}, w^{-}_{jk}, w^{-}_{ki}) \in A$. 
\begin{table}[h]
\centering
\caption{
The values of $\Omega$ when 
$(x_{ij}, x_{jk}, x_{ki}) \in L_T$ 
and
$(w^{-}_{ij}, w^{-}_{jk}, w^{-}_{ki}) \in A$.
}
\begin{tabular}{|cc|cccc|}
\hline
\multicolumn{2}{|c|}{\multirow{2}{*}{$\Omega$}} & \multicolumn{4}{c|}{$(x_{ij}, x_{jk}, x_{ki})$}                                   \\ \cline{3-6} 
\multicolumn{2}{|c|}{}  & \multicolumn{1}{c|}{(0,0,0)} & \multicolumn{1}{c|}{(1,1,0)} & \multicolumn{1}{c|}{(0,1,1)} & (1,1,1) \\ \hline
\multicolumn{1}{|c|}{}  & $( 0 , 0 , 0 )$ & \multicolumn{1}{c|}{ 0 }  & \multicolumn{1}{c|}{ -16/5 }  & \multicolumn{1}{c|}{ -16/5 }  &  -24/5        \\ \cline{2-6} 
\multicolumn{1}{|c|}{}  & $( 0 , 0.35 , 0.35 )$ & \multicolumn{1}{c|}{ -21/50 }  & \multicolumn{1}{c|}{ -5/2 }  & \multicolumn{1}{c|}{ -69/50 }  &  -149/50        \\ \cline{2-6} 
\multicolumn{1}{|c|}{}  & $( 0 , 0.63 , 0.63 )$ & \multicolumn{1}{c|}{ -319/250 }  & \multicolumn{1}{c|}{ -43/50 }  & \multicolumn{1}{c|}{ -111/250 }  &  -111/250        \\ \cline{2-6} 
\multicolumn{1}{|c|}{}  & $( 0 , 1 , 1 )$ & \multicolumn{1}{c|}{ -16/5 }  & \multicolumn{1}{c|}{ -8/5 }  & \multicolumn{1}{c|}{ 0 }  &  0        \\ \cline{2-6} 
\multicolumn{1}{|c|}{}  & $( 0.35 , 0 , 0.35 )$ & \multicolumn{1}{c|}{ -21/50 }  & \multicolumn{1}{c|}{ -5/2 }  & \multicolumn{1}{c|}{ -5/2 }  &  -149/50        \\ \cline{2-6} 
\multicolumn{1}{|c|}{}  & $( 0.35 , 0.35 , 0 )$ & \multicolumn{1}{c|}{ -21/50 }  & \multicolumn{1}{c|}{ -69/50 }  & \multicolumn{1}{c|}{ -5/2 }  &  -149/50        \\ \cline{2-6} 
\multicolumn{1}{|c|}{}  & $( 0.35 , 0.35 , 0.35 )$ & \multicolumn{1}{c|}{ -63/100 }  & \multicolumn{1}{c|}{ -159/100 }  & \multicolumn{1}{c|}{ -159/100 }  &  -207/100        \\ \cline{2-6} 
\multicolumn{1}{|c|}{}  & $( 0.35 , 0.35 , 0.63 )$ & \multicolumn{1}{c|}{ -99/500 }  & \multicolumn{1}{c|}{ -579/500 }  & \multicolumn{1}{c|}{ -131/500 }  &  -371/500        \\ \cline{2-6} 
\multicolumn{1}{|c|}{}  & $( 0.35 , 0.63 , 0.35 )$ & \multicolumn{1}{c|}{ -99/500 }  & \multicolumn{1}{c|}{ -131/500 }  & \multicolumn{1}{c|}{ -131/500 }  &  -371/500        \\ \cline{2-6} 
\multicolumn{1}{|c|}{}  & $( 0.35 , 0.63 , 0.63 )$ & \multicolumn{1}{c|}{ -319/250 }  & \multicolumn{1}{c|}{ -43/50 }  & \multicolumn{1}{c|}{ -111/250 }  &  -111/250        \\ \cline{2-6} 
\multicolumn{1}{|c|}{}  & $( 0.35 , 1 , 1 )$ & \multicolumn{1}{c|}{ -16/5 }  & \multicolumn{1}{c|}{ -8/5 }  & \multicolumn{1}{c|}{ 0 }  &  0        \\ \cline{2-6} 
\multicolumn{1}{|c|}{\multirow{3}{*}{$(w^{-}_{ij}, w^{-}_{jk}, w^{-}_{ki})$}}  
                        & $( 0.63 , 0 , 0.63 )$ & \multicolumn{1}{c|}{ -319/250 }  & \multicolumn{1}{c|}{ -43/50 }  & \multicolumn{1}{c|}{ -43/50 }  &  -111/250        \\ \cline{2-6} 
\multicolumn{1}{|c|}{}  & $( 0.63 , 0.35 , 0.35 )$ & \multicolumn{1}{c|}{ -99/500 }  & \multicolumn{1}{c|}{ -131/500 }  & \multicolumn{1}{c|}{ -579/500 }  &  -371/500        \\ \cline{2-6} 
\multicolumn{1}{|c|}{}  & $( 0.63 , 0.35 , 0.63 )$ & \multicolumn{1}{c|}{ -319/250 }  & \multicolumn{1}{c|}{ -43/50 }  & \multicolumn{1}{c|}{ -43/50 }  &  -111/250        \\ \cline{2-6} 
\multicolumn{1}{|c|}{}  & $( 0.63 , 0.63 , 0 )$ & \multicolumn{1}{c|}{ -319/250 }  & \multicolumn{1}{c|}{ -111/250 }  & \multicolumn{1}{c|}{ -43/50 }  &  -111/250        \\ \cline{2-6} 
\multicolumn{1}{|c|}{}  & $( 0.63 , 0.63 , 0.35 )$ & \multicolumn{1}{c|}{ -319/250 }  & \multicolumn{1}{c|}{ -111/250 }  & \multicolumn{1}{c|}{ -43/50 }  &  -111/250        \\ \cline{2-6} 
\multicolumn{1}{|c|}{}  & $( 0.63 , 0.63 , 0.63 )$ & \multicolumn{1}{c|}{ 0 }  & \multicolumn{1}{c|}{ 0 }  & \multicolumn{1}{c|}{ 0 }  &  0        \\ \cline{2-6} 
\multicolumn{1}{|c|}{}  & $( 0.63 , 0.63 , 1 )$ & \multicolumn{1}{c|}{ 0 }  & \multicolumn{1}{c|}{ 0 }  & \multicolumn{1}{c|}{ 0 }  &  0        \\ \cline{2-6} 
\multicolumn{1}{|c|}{}  & $( 0.63 , 1 , 0.63 )$ & \multicolumn{1}{c|}{ 0 }  & \multicolumn{1}{c|}{ 0 }  & \multicolumn{1}{c|}{ 0 }  &  0        \\ \cline{2-6} 
\multicolumn{1}{|c|}{}  & $( 0.63 , 1 , 1 )$ & \multicolumn{1}{c|}{ 0 }  & \multicolumn{1}{c|}{ 0 }  & \multicolumn{1}{c|}{ 0 }  &  0        \\ \cline{2-6} 
\multicolumn{1}{|c|}{}  & $( 1 , 0 , 1 )$ & \multicolumn{1}{c|}{ -16/5 }  & \multicolumn{1}{c|}{ -8/5 }  & \multicolumn{1}{c|}{ -8/5 }  &  0        \\ \cline{2-6} 
\multicolumn{1}{|c|}{}  & $( 1 , 0.35 , 1 )$ & \multicolumn{1}{c|}{ -16/5 }  & \multicolumn{1}{c|}{ -8/5 }  & \multicolumn{1}{c|}{ -8/5 }  &  0        \\ \cline{2-6} 
\multicolumn{1}{|c|}{}  & $( 1 , 0.63 , 0.63 )$ & \multicolumn{1}{c|}{ 0 }  & \multicolumn{1}{c|}{ 0 }  & \multicolumn{1}{c|}{ 0 }  &  0        \\ \cline{2-6} 
\multicolumn{1}{|c|}{}  & $( 1 , 0.63 , 1 )$ & \multicolumn{1}{c|}{ 0 }  & \multicolumn{1}{c|}{ 0 }  & \multicolumn{1}{c|}{ 0 }  &  0        \\ \cline{2-6} 
\multicolumn{1}{|c|}{}  & $( 1 , 1 , 0 )$ & \multicolumn{1}{c|}{ -16/5 }  & \multicolumn{1}{c|}{ 0 }  & \multicolumn{1}{c|}{ -8/5 }  &  0        \\ \cline{2-6} 
\multicolumn{1}{|c|}{}  & $( 1 , 1 , 0.35 )$ & \multicolumn{1}{c|}{ -16/5 }  & \multicolumn{1}{c|}{ 0 }  & \multicolumn{1}{c|}{ -8/5 }  &  0        \\ \cline{2-6} 
\multicolumn{1}{|c|}{}  & $( 1 , 1 , 0.63 )$ & \multicolumn{1}{c|}{ 0 }  & \multicolumn{1}{c|}{ 0 }  & \multicolumn{1}{c|}{ 0 }  &  0        \\ \cline{2-6} 
\multicolumn{1}{|c|}{}  & $( 1 , 1 , 1 )$ & \multicolumn{1}{c|}{ 0 }  & \multicolumn{1}{c|}{ 0 }  & \multicolumn{1}{c|}{ 0 }  &  0        \\ \hline 
\end{tabular}
\label{table:Omega}
\end{table}
\pagebreak

The maximum value of $\Omega$ when 
$(w^{-}_{ij}, w^{-}_{jk}, w^{-}_{ki}) \in B$
and
$(x_{ij}, x_{jk}, x_{ki}) \in L_T$ has been computed case by case as follows:
\\
$\bullet$ If $(w^{-}_{ij}, w^{-}_{jk}, w^{-}_{ki}) \in I_{ 1 } \times I_{ 1 } \times I_{ 1 } $: 
\begin{align*}
    &\Omega( 0 , 0 , 0 , w^{-}_{ij}, w^{-}_{jk}, w^{-}_{ij}+w^{-}_{jk}) =  - \frac{6 w^{-}_{ij}}{5} - \frac{6 w^{-}_{jk}}{5}  &
    &\text{argmax} = ( 0 , 0 , 0 ) &
    &\text{max} =  0
\\
    &\Omega( 1 , 1 , 0 , w^{-}_{ij}, w^{-}_{jk}, w^{-}_{ij}+w^{-}_{jk}) =  2 w^{-}_{ij} + 2 w^{-}_{jk} - \frac{16}{5}  &
    &\text{argmax} = ( 0 , \frac{7}{20} , \frac{7}{20} ) &
    &\text{max} =  - \frac{5}{2}
\\
    &\Omega( 0 , 1 , 1 , w^{-}_{ij}, w^{-}_{jk}, w^{-}_{ij}+w^{-}_{jk}) =  2 w^{-}_{ij} + \frac{26 w^{-}_{jk}}{5} - \frac{16}{5}  &
    &\text{argmax} = ( 0 , \frac{7}{20} , \frac{7}{20} ) &
    &\text{max} =  - \frac{69}{50}
\\
    &\Omega( 1 , 1 , 1 , w^{-}_{ij}, w^{-}_{jk}, w^{-}_{ij}+w^{-}_{jk}) =  \frac{26 w^{-}_{ij}}{5} + \frac{26 w^{-}_{jk}}{5} - \frac{24}{5}  &
    &\text{argmax} = ( 0 , \frac{7}{20} , \frac{7}{20} ) &
    &\text{max} =  - \frac{149}{50}
\end{align*}
\\
$\bullet$ If $(w^{-}_{ij}, w^{-}_{jk}, w^{-}_{ki}) \in I_{ 1 } \times I_{ 1 } \times I_{ 2 } $: 
\begin{align*}
    &\Omega( 0 , 0 , 0 , w^{-}_{ij}, w^{-}_{jk}, w^{-}_{ij}+w^{-}_{jk}) =  - \frac{50 \left(w^{-}_{ij}\right)^{2}}{7} - \frac{100 w^{-}_{ij} w^{-}_{jk}}{7} + \frac{591 w^{-}_{ij}}{70} - \frac{50 \left(w^{-}_{jk}\right)^{2}}{7} \\
    &\qquad\qquad\qquad\qquad\qquad\qquad\qquad
    + \frac{591 w^{-}_{jk}}{70} - \frac{5}{2}  \\
    &\text{argmax} = ( \frac{241}{1000} , \frac{7}{20} , \frac{591}{1000} ) \text{ \quad max} =  - \frac{719}{140000}
\\
    &\Omega( 1 , 1 , 0 , w^{-}_{ij}, w^{-}_{jk}, w^{-}_{ij}+w^{-}_{jk}) =  - \frac{50 \left(w^{-}_{ij}\right)^{2}}{7} - \frac{100 w^{-}_{ij} w^{-}_{jk}}{7} + \frac{163 w^{-}_{ij}}{14} - \frac{50 \left(w^{-}_{jk}\right)^{2}}{7} \\
    &\qquad\qquad\qquad\qquad\qquad\qquad\qquad
    + \frac{163 w^{-}_{jk}}{14} - \frac{57}{10}  \\
    &\text{argmax} = ( \frac{7}{25} , \frac{7}{20} , \frac{63}{100} ) \text{ \quad max} =  - \frac{6}{5}
\\
    &\Omega( 0 , 1 , 1 , w^{-}_{ij}, w^{-}_{jk}, w^{-}_{ij}+w^{-}_{jk}) =  - \frac{50 \left(w^{-}_{ij}\right)^{2}}{7} - \frac{100 w^{-}_{ij} w^{-}_{jk}}{7} + \frac{163 w^{-}_{ij}}{14} - \frac{50 \left(w^{-}_{jk}\right)^{2}}{7} \\
    &\qquad\qquad\qquad\qquad\qquad\qquad\qquad
    + \frac{1039 w^{-}_{jk}}{70} - \frac{57}{10}  \\
    &\text{argmax} = ( \frac{7}{25} , \frac{7}{20} , \frac{63}{100} ) \text{ \quad max} =  - \frac{2}{25}
\\
    &\Omega( 1 , 1 , 1 , w^{-}_{ij}, w^{-}_{jk}, w^{-}_{ij}+w^{-}_{jk}) =  - \frac{50 \left(w^{-}_{ij}\right)^{2}}{7} - \frac{100 w^{-}_{ij} w^{-}_{jk}}{7} + \frac{1039 w^{-}_{ij}}{70} - \frac{50 \left(w^{-}_{jk}\right)^{2}}{7} \\
    &\qquad\qquad\qquad\qquad\qquad\qquad\qquad
    + \frac{1039 w^{-}_{jk}}{70} - \frac{73}{10}  \\
    &\text{argmax} = ( \frac{7}{25} , \frac{7}{20} , \frac{63}{100} ) \text{ \quad max} =  - \frac{98}{125}
\end{align*}
\\
$\bullet$ If $(w^{-}_{ij}, w^{-}_{jk}, w^{-}_{ki}) \in I_{ 1 } \times I_{ 1 } \times I_{ 3 } $: 
\begin{align*}
    &\Omega( 0 , 0 , 0 , w^{-}_{ij}, w^{-}_{jk}, w^{-}_{ij}+w^{-}_{jk}) =  - \frac{16 w^{-}_{ij}}{5} - \frac{16 w^{-}_{jk}}{5} + 2  &
    &\text{argmax} = ( \frac{7}{25} , \frac{7}{20} , \frac{63}{100} ) &
    &\text{max} =  - \frac{2}{125}
\\
    &\Omega( 1 , 1 , 0 , w^{-}_{ij}, w^{-}_{jk}, w^{-}_{ij}+w^{-}_{jk}) =  - \frac{6}{5}  &
    &\text{argmax} = ( \frac{7}{25} , \frac{7}{20} , \frac{63}{100} ) &
    &\text{max} =  - \frac{6}{5}
\\
    &\Omega( 0 , 1 , 1 , w^{-}_{ij}, w^{-}_{jk}, w^{-}_{ij}+w^{-}_{jk}) =  \frac{16 w^{-}_{jk}}{5} - \frac{6}{5}  &
    &\text{argmax} = ( \frac{7}{25} , \frac{7}{20} , \frac{63}{100} ) &
    &\text{max} =  - \frac{2}{25}
\\
    &\Omega( 1 , 1 , 1 , w^{-}_{ij}, w^{-}_{jk}, w^{-}_{ij}+w^{-}_{jk}) =  \frac{16 w^{-}_{ij}}{5} + \frac{16 w^{-}_{jk}}{5} - \frac{14}{5}  &
    &\text{argmax} = ( \frac{7}{20} , \frac{7}{20} , \frac{7}{10} ) &
    &\text{max} =  - \frac{14}{25}
\end{align*}
\\
$\bullet$ If $(w^{-}_{ij}, w^{-}_{jk}, w^{-}_{ki}) \in I_{ 1 } \times I_{ 2 } \times I_{ 2 } $: 
\begin{align*}
    &\Omega( 0 , 0 , 0 , w^{-}_{ij}, w^{-}_{jk}, w^{-}_{ij}+w^{-}_{jk}) =  
    \frac{2875 \left(w^{-}_{ij}\right)^{2} w^{-}_{jk}}{49} - \frac{775 \left(w^{-}_{ij}\right)^{2}}{28} + \frac{2875 w^{-}_{ij} \left(w^{-}_{jk}\right)^{2}}{49} \\
    &\qquad\qquad\qquad\qquad\qquad\qquad\quad - \frac{9325 w^{-}_{ij} w^{-}_{jk}}{98} + \frac{16553 w^{-}_{ij}}{560} - \frac{1950 \left(w^{-}_{jk}\right)^{2}}{49} + \frac{1258 w^{-}_{jk}}{35} - \frac{65}{8} \\
    &\text{argmax} = ( \frac{241}{1000} , \frac{7}{20} , \frac{591}{1000} ) \text{ \quad max} =  - \frac{719}{140000}
\\
    &\Omega( 1 , 1 , 0 , w^{-}_{ij}, w^{-}_{jk}, w^{-}_{ij}+w^{-}_{jk}) =  \frac{125 \left(w^{-}_{ij}\right)^{2} w^{-}_{jk}}{7} - \frac{375 \left(w^{-}_{ij}\right)^{2}}{28} + \frac{125 w^{-}_{ij} \left(w^{-}_{jk}\right)^{2}}{7} \\
    &\qquad\qquad\qquad\qquad\qquad\qquad\quad - \frac{4525 w^{-}_{ij} w^{-}_{jk}}{98} + \frac{2309 w^{-}_{ij}}{112} - \frac{950 \left(w^{-}_{jk}\right)^{2}}{49} + \frac{174 w^{-}_{jk}}{7} - \frac{353}{40}  \\
    &\text{argmax} = ( 0 , \frac{63}{100} , \frac{63}{100} ) \text{ \quad max} =  - \frac{43}{50}
\\
    &\Omega( 0 , 1 , 1 , w^{-}_{ij}, w^{-}_{jk}, w^{-}_{ij}+w^{-}_{jk}) =  \frac{2875 \left(w^{-}_{ij}\right)^{2} w^{-}_{jk}}{49} - \frac{775 \left(w^{-}_{ij}\right)^{2}}{28} + \frac{2875 w^{-}_{ij} \left(w^{-}_{jk}\right)^{2}}{49} \\
    &\qquad\qquad\qquad\qquad\qquad\qquad\quad - \frac{9325 w^{-}_{ij} w^{-}_{jk}}{98} + \frac{3669 w^{-}_{ij}}{112} - \frac{1950 \left(w^{-}_{jk}\right)^{2}}{49} + \frac{1482 w^{-}_{jk}}{35} - \frac{453}{40}  \\
    &\text{argmax} = ( \frac{21}{100} , \frac{21}{50} , \frac{63}{100} ) \text{ \quad max} =  - \frac{69}{2000}
\end{align*}
\begin{align*}
    &\Omega( 1 , 1 , 1 , w^{-}_{ij}, w^{-}_{jk}, w^{-}_{ij}+w^{-}_{jk}) =  \frac{125 \left(w^{-}_{ij}\right)^{2} w^{-}_{jk}}{7} - \frac{375 \left(w^{-}_{ij}\right)^{2}}{28} + \frac{125 w^{-}_{ij} \left(w^{-}_{jk}\right)^{2}}{7} \\
    &\qquad\qquad\qquad\qquad\qquad\qquad\quad - \frac{4525 w^{-}_{ij} w^{-}_{jk}}{98} + \frac{13337 w^{-}_{ij}}{560} - \frac{950 \left(w^{-}_{jk}\right)^{2}}{49} + \frac{982 w^{-}_{jk}}{35} - \frac{417}{40}  \\
    &\text{argmax} = ( 0 , \frac{63}{100} , \frac{63}{100} ) \text{ \quad max} =  - \frac{111}{250}
\end{align*}
\\
$\bullet$ If $(w^{-}_{ij}, w^{-}_{jk}, w^{-}_{ki}) \in I_{ 1 } \times I_{ 2 } \times I_{ 3 } $: 
\begin{align*}
    &\Omega( 0 , 0 , 0 , w^{-}_{ij}, w^{-}_{jk}, w^{-}_{ij}+w^{-}_{jk}) =  \frac{15 w^{-}_{ij} w^{-}_{jk}}{7} - \frac{79 w^{-}_{ij}}{20} - \frac{50 \left(w^{-}_{jk}\right)^{2}}{7} - \frac{7 w^{-}_{jk}}{10} + 2  \\
    &\text{argmax} = ( \frac{7}{25} , \frac{7}{20} , \frac{63}{100} ) \text{ \quad max} =  - \frac{2}{125}
\\
    &\Omega( 1 , 1 , 0 , w^{-}_{ij}, w^{-}_{jk}, w^{-}_{ij}+w^{-}_{jk}) =  - \frac{65 w^{-}_{ij} w^{-}_{jk}}{7} + \frac{13 w^{-}_{ij}}{4} - \frac{50 \left(w^{-}_{jk}\right)^{2}}{7} + \frac{115 w^{-}_{jk}}{14} - \frac{16}{5}  \\
    &\text{argmax} = ( 0 , \frac{63}{100} , \frac{63}{100} ) \text{ \quad max} =  - \frac{43}{50}
\\
    &\Omega( 0 , 1 , 1 , w^{-}_{ij}, w^{-}_{jk}, w^{-}_{ij}+w^{-}_{jk}) =  \frac{15 w^{-}_{ij} w^{-}_{jk}}{7} - \frac{3 w^{-}_{ij}}{4} - \frac{50 \left(w^{-}_{jk}\right)^{2}}{7} + \frac{57 w^{-}_{jk}}{10} - \frac{6}{5}  \\
    &\text{argmax} = ( \frac{7}{20} , \frac{903}{2000} , \frac{1603}{2000} ) \text{ \quad max} =  - \frac{513}{80000}
\\
    &\Omega( 1 , 1 , 1 , w^{-}_{ij}, w^{-}_{jk}, w^{-}_{ij}+w^{-}_{jk}) =  - \frac{65 w^{-}_{ij} w^{-}_{jk}}{7} + \frac{129 w^{-}_{ij}}{20} - \frac{50 \left(w^{-}_{jk}\right)^{2}}{7} + \frac{799 w^{-}_{jk}}{70} - \frac{24}{5}  \\
    &\text{argmax} = ( \frac{7}{20} , \frac{1143}{2000} , \frac{1843}{2000} ) \text{ \quad max} =  - \frac{117351}{560000}
\end{align*}
\\
$\bullet$ If $(w^{-}_{ij}, w^{-}_{jk}, w^{-}_{ki}) \in I_{ 1 } \times I_{ 3 } \times I_{ 3 } $: 
\begin{align*}
    &\Omega( 0 , 0 , 0 , w^{-}_{ij}, w^{-}_{jk}, w^{-}_{ij}+w^{-}_{jk}) =  - \frac{13 w^{-}_{ij}}{5} - \frac{26 w^{-}_{jk}}{5} + 2  \\
    &\text{argmax} = ( 0 , \dfrac{63}{100} , \dfrac{63}{100} ) \text{\qquad max} =  \dfrac{-319}{250}
\\
    &\Omega( 1 , 1 , 0 , w^{-}_{ij}, w^{-}_{jk}, w^{-}_{ij}+w^{-}_{jk}) =  - \frac{13 w^{-}_{ij}}{5} - 2 w^{-}_{jk} + \frac{2}{5}  \\
    &\text{argmax} = ( 0 , \dfrac{63}{100} , \dfrac{63}{100} ) \text{\qquad max} =  \dfrac{-43}{50}
\end{align*}
\begin{align*}
    &\Omega( 0 , 1 , 1 , w^{-}_{ij}, w^{-}_{jk}, w^{-}_{ij}+w^{-}_{jk}) =  \frac{3 w^{-}_{ij}}{5} + \frac{6 w^{-}_{jk}}{5} - \frac{6}{5}  \\
    &\text{argmax} = ( 0 , 1 , 1 ) \text{ \qquad max} =  0
\\
    &\Omega( 1 , 1 , 1 , w^{-}_{ij}, w^{-}_{jk}, w^{-}_{ij}+w^{-}_{jk}) =  \frac{3 w^{-}_{ij}}{5} + \frac{6 w^{-}_{jk}}{5} - \frac{6}{5}  \\
    &\text{argmax} = ( 0 , 1 , 1 ) \text{ \qquad max} =  0
\end{align*}
\\
$\bullet$ If $(w^{-}_{ij}, w^{-}_{jk}, w^{-}_{ki}) \in I_{ 2 } \times I_{ 1 } \times I_{ 2 } $: 
\begin{align*}
    &\Omega( 0 , 0 , 0 , w^{-}_{ij}, w^{-}_{jk}, w^{-}_{ij}+w^{-}_{jk}) =  \frac{2875 \left(w^{-}_{ij}\right)^{2} w^{-}_{jk}}{49} - \frac{1950 \left(w^{-}_{ij}\right)^{2}}{49} + \frac{2875 w^{-}_{ij} \left(w^{-}_{jk}\right)^{2}}{49} \\
    &\qquad\qquad\qquad\quad - \frac{9325 w^{-}_{ij} w^{-}_{jk}}{98} + \frac{1258 w^{-}_{ij}}{35} - \frac{775 \left(w^{-}_{jk}\right)^{2}}{28} + \frac{16553 w^{-}_{jk}}{560} - \frac{65}{8}  \\
    &\text{argmax} = ( \frac{7}{20} , \frac{241}{1000} , \frac{591}{1000} ) \text{ \quad max} =  - \frac{719}{140000}
\\
    &\Omega( 1 , 1 , 0 , w^{-}_{ij}, w^{-}_{jk}, w^{-}_{ij}+w^{-}_{jk}) =  \frac{125 \left(w^{-}_{ij}\right)^{2} w^{-}_{jk}}{7} - \frac{950 \left(w^{-}_{ij}\right)^{2}}{49} + \frac{125 w^{-}_{ij} \left(w^{-}_{jk}\right)^{2}}{7} \\
    &\qquad\qquad\qquad\quad - \frac{4525 w^{-}_{ij} w^{-}_{jk}}{98} + \frac{174 w^{-}_{ij}}{7} - \frac{375 \left(w^{-}_{jk}\right)^{2}}{28} + \frac{2309 w^{-}_{jk}}{112} - \frac{353}{40}  \\
    &\text{argmax} = ( \frac{63}{100} , 0 , \frac{63}{100} ) \text{ \quad max} =  - \frac{43}{50}
\\
    &\Omega( 0 , 1 , 1 , w^{-}_{ij}, w^{-}_{jk}, w^{-}_{ij}+w^{-}_{jk}) =  \frac{125 \left(w^{-}_{ij}\right)^{2} w^{-}_{jk}}{7} - \frac{950 \left(w^{-}_{ij}\right)^{2}}{49} + \frac{125 w^{-}_{ij} \left(w^{-}_{jk}\right)^{2}}{7} \\
    &\qquad\qquad\qquad\quad - \frac{4525 w^{-}_{ij} w^{-}_{jk}}{98} + \frac{174 w^{-}_{ij}}{7} - \frac{375 \left(w^{-}_{jk}\right)^{2}}{28} + \frac{13337 w^{-}_{jk}}{560} - \frac{353}{40}  \\
    &\text{argmax} = ( \frac{7}{20} , \frac{7}{25} , \frac{63}{100} ) \text{ \quad max} =  - \frac{38}{125}
\\
    &\Omega( 1 , 1 , 1 , w^{-}_{ij}, w^{-}_{jk}, w^{-}_{ij}+w^{-}_{jk}) =  \frac{125 \left(w^{-}_{ij}\right)^{2} w^{-}_{jk}}{7} - \frac{950 \left(w^{-}_{ij}\right)^{2}}{49} + \frac{125 w^{-}_{ij} \left(w^{-}_{jk}\right)^{2}}{7} \\
    &\qquad\qquad\qquad\quad - \frac{4525 w^{-}_{ij} w^{-}_{jk}}{98} + \qquad \frac{982 w^{-}_{ij}}{35} - \frac{375 \left(w^{-}_{jk}\right)^{2}}{28} + \frac{13337 w^{-}_{jk}}{560} - \frac{417}{40}  \\
    &\text{argmax} = ( \frac{63}{100} , 0 , \frac{63}{100} ) \text{ \quad max} =  - \frac{111}{250}
\end{align*}
\\
$\bullet$ If $(w^{-}_{ij}, w^{-}_{jk}, w^{-}_{ki}) \in I_{ 2 } \times I_{ 1 } \times I_{ 3 } $: 
\begin{align*}
    &\Omega( 0 , 0 , 0 , w^{-}_{ij}, w^{-}_{jk}, w^{-}_{ij}+w^{-}_{jk}) =  - \frac{50 \left(w^{-}_{ij}\right)^{2}}{7} + \frac{15 w^{-}_{ij} w^{-}_{jk}}{7} - \frac{7 w^{-}_{ij}}{10} - \frac{79 w^{-}_{jk}}{20} + 2  \\
    &\text{argmax} = ( \frac{7}{20} , \frac{7}{25} , \frac{63}{100} ) \text{ \quad max} =  - \frac{2}{125}
\\
    &\Omega( 1 , 1 , 0 , w^{-}_{ij}, w^{-}_{jk}, w^{-}_{ij}+w^{-}_{jk}) =  - \frac{50 \left(w^{-}_{ij}\right)^{2}}{7} - \frac{65 w^{-}_{ij} w^{-}_{jk}}{7} + \frac{115 w^{-}_{ij}}{14} + \frac{13 w^{-}_{jk}}{4} - \frac{16}{5}  \\
    &\text{argmax} = ( \frac{63}{100} , 0 , \frac{63}{100} ) \text{ \quad max} =  - \frac{43}{50}
\\
    &\Omega( 0 , 1 , 1 , w^{-}_{ij}, w^{-}_{jk}, w^{-}_{ij}+w^{-}_{jk}) =  - \frac{50 \left(w^{-}_{ij}\right)^{2}}{7} - \frac{65 w^{-}_{ij} w^{-}_{jk}}{7} + \frac{115 w^{-}_{ij}}{14} + \frac{129 w^{-}_{jk}}{20} - \frac{16}{5}  \\
    &\text{argmax} = ( \frac{7}{20} , \frac{7}{20} , \frac{7}{10} ) \text{ \quad max} =  - \frac{2}{25}
\\
    &\Omega( 1 , 1 , 1 , w^{-}_{ij}, w^{-}_{jk}, w^{-}_{ij}+w^{-}_{jk}) =  - \frac{50 \left(w^{-}_{ij}\right)^{2}}{7} - \frac{65 w^{-}_{ij} w^{-}_{jk}}{7} + \frac{799 w^{-}_{ij}}{70} + \frac{129 w^{-}_{jk}}{20} - \frac{24}{5}  \\
    &\text{argmax} = ( \frac{1143}{2000} , \frac{7}{20} , \frac{1843}{2000} ) \text{ \quad max} =  - \frac{117351}{560000}
\end{align*}
\\
$\bullet$ If $(w^{-}_{ij}, w^{-}_{jk}, w^{-}_{ki}) \in I_{ 2 } \times I_{ 2 } \times I_{ 3 } $: 
\begin{align*}
    &\Omega( 0 , 0 , 0 , w^{-}_{ij}, w^{-}_{jk}, w^{-}_{ij}+w^{-}_{jk}) =  \frac{2875 \left(w^{-}_{ij}\right)^{2} w^{-}_{jk}}{49} - \frac{775 \left(w^{-}_{ij}\right)^{2}}{28} + \frac{2875 w^{-}_{ij} \left(w^{-}_{jk}\right)^{2}}{49} \\
    &\qquad\qquad\qquad\qquad\quad
    - \frac{6105 w^{-}_{ij} w^{-}_{jk}}{98} + \frac{8213 w^{-}_{ij}}{560} - \frac{775 \left(w^{-}_{jk}\right)^{2}}{28} + \frac{8213 w^{-}_{jk}}{560} - \frac{9}{8}  \\
    &\text{argmax} = ( \frac{7}{20} , \frac{7}{20} , \frac{7}{10} ) \text{ \quad max} =  - \frac{6}{25}
\\
    &\Omega( 1 , 1 , 0 , w^{-}_{ij}, w^{-}_{jk}, w^{-}_{ij}+w^{-}_{jk}) =  \frac{2875 \left(w^{-}_{ij}\right)^{2} w^{-}_{jk}}{49} - \frac{775 \left(w^{-}_{ij}\right)^{2}}{28} + \frac{2875 w^{-}_{ij} \left(w^{-}_{jk}\right)^{2}}{49} \\
    &\qquad\qquad\qquad\qquad\quad
    - \frac{8345 w^{-}_{ij} w^{-}_{jk}}{98} + \frac{3089 w^{-}_{ij}}{112} - \frac{775 \left(w^{-}_{jk}\right)^{2}}{28} + \frac{3089 w^{-}_{jk}}{112} - \frac{333}{40}  \\
    &\text{argmax} = ( \frac{7}{20} , \frac{7}{20} , \frac{7}{10} ) \text{ \quad max} =  - \frac{6}{5}
\end{align*}
\begin{align*}
    &\Omega( 0 , 1 , 1 , w^{-}_{ij}, w^{-}_{jk}, w^{-}_{ij}+w^{-}_{jk}) =  \frac{125 \left(w^{-}_{ij}\right)^{2} w^{-}_{jk}}{7} - \frac{375 \left(w^{-}_{ij}\right)^{2}}{28} + \frac{125 w^{-}_{ij} \left(w^{-}_{jk}\right)^{2}}{7} \\
    &\qquad\qquad\qquad\qquad\quad 
    - \frac{2425 w^{-}_{ij} w^{-}_{jk}}{98} + \frac{183 w^{-}_{ij}}{16} - \frac{375 \left(w^{-}_{jk}\right)^{2}}{28} + \frac{7237 w^{-}_{jk}}{560} - \frac{153}{40}  \\
    &\text{argmax} = ( \frac{7}{20} , \frac{903}{2000} , \frac{1603}{2000} ) \text{ \quad max} =  - \frac{513}{80000}
\\
    &\Omega( 1 , 1 , 1 , w^{-}_{ij}, w^{-}_{jk}, w^{-}_{ij}+w^{-}_{jk}) =  \frac{125 \left(w^{-}_{ij}\right)^{2} w^{-}_{jk}}{7} - \frac{375 \left(w^{-}_{ij}\right)^{2}}{28} + \frac{125 w^{-}_{ij} \left(w^{-}_{jk}\right)^{2}}{7} \\
    &\qquad\qquad\qquad\qquad\quad 
    - \frac{3545 w^{-}_{ij} w^{-}_{jk}}{98} + \frac{1491 w^{-}_{ij}}{80} - \frac{375 \left(w^{-}_{jk}\right)^{2}}{28} + \frac{1491 w^{-}_{jk}}{80} - \frac{297}{40}  \\
    &\text{argmax} = ( \frac{1}{2} , \frac{1}{2} , 1 ) \text{ \quad max} =  - \frac{247}{3920}
\end{align*}
\\
$\bullet$ If $(w^{-}_{ij}, w^{-}_{jk}, w^{-}_{ki}) \in I_{ 2 } \times I_{ 3 } \times I_{ 3 } $: 
\begin{align*}
    &\Omega( 0 , 0 , 0 , w^{-}_{ij}, w^{-}_{jk}, w^{-}_{ij}+w^{-}_{jk}) =  \frac{\left(100 w^{-}_{ij} - 63\right) \left(13 w^{-}_{ij} + 26 w^{-}_{jk} - 10\right)}{140}  \\
    &\text{argmax} = ( \frac{37}{100} , \frac{63}{100} , 1 ) \text{ \quad max} =  - \frac{14547}{7000}
\\
    &\Omega( 1 , 1 , 0 , w^{-}_{ij}, w^{-}_{jk}, w^{-}_{ij}+w^{-}_{jk}) =  \frac{\left(100 w^{-}_{ij} - 63\right) \left(13 w^{-}_{ij} + 10 w^{-}_{jk} - 2\right)}{140}  \\
    &\text{argmax} = ( \frac{37}{100} , \frac{63}{100} , 1 ) \text{ \quad max} =  - \frac{11843}{7000}
\\
    &\Omega( 0 , 1 , 1 , w^{-}_{ij}, w^{-}_{jk}, w^{-}_{ij}+w^{-}_{jk}) =  \frac{3 \cdot \left(100 w^{-}_{ij} - 63\right) \left(- w^{-}_{ij} - 2 w^{-}_{jk} + 2\right)}{140}  \\
    &\text{argmax} = ( \frac{37}{100} , \frac{63}{100} , 1 ) \text{ \quad max} =  - \frac{1443}{7000}
\\
    &\Omega( 1 , 1 , 1 , w^{-}_{ij}, w^{-}_{jk}, w^{-}_{ij}+w^{-}_{jk}) =  \frac{3 \cdot \left(100 w^{-}_{ij} - 63\right) \left(- w^{-}_{ij} - 2 w^{-}_{jk} + 2\right)}{140}  \\
    &\text{argmax} = ( \frac{37}{100} , \frac{63}{100} , 1 ) \text{ \quad max} =  - \frac{1443}{7000}
\end{align*}
\\
$\bullet$ If $(w^{-}_{ij}, w^{-}_{jk}, w^{-}_{ki}) \in I_{ 3 } \times I_{ 1 } \times I_{ 3 } $: 
\begin{align*}
    &\Omega( 0 , 0 , 0 , w^{-}_{ij}, w^{-}_{jk}, w^{-}_{ij}+w^{-}_{jk}) =  - \frac{26 w^{-}_{ij}}{5} - \frac{13 w^{-}_{jk}}{5} + 2  \\
    &\text{argmax} = ( \frac{63}{100} , 0 , \frac{63}{100} ) \text{ \quad max} =  - \frac{319}{250}
\\
    &\Omega( 1 , 1 , 0 , w^{-}_{ij}, w^{-}_{jk}, w^{-}_{ij}+w^{-}_{jk}) =  -2 w^{-}_{ij} - \frac{13 w^{-}_{jk}}{5} + \frac{2}{5}  \\
    &\text{argmax} = ( \frac{63}{100} , 0 , \frac{63}{100} ) \text{ \quad max} =  - \frac{43}{50}
\\
    &\Omega( 0 , 1 , 1 , w^{-}_{ij}, w^{-}_{jk}, w^{-}_{ij}+w^{-}_{jk}) =  -2 w^{-}_{ij} + \frac{3 w^{-}_{jk}}{5} + \frac{2}{5} \\
    &\text{argmax} = ( \frac{63}{100} , \frac{7}{20} , \frac{49}{50} ) \text{ \quad max} =  - \frac{13}{20}
\\
    &\Omega( 1 , 1 , 1 , w^{-}_{ij}, w^{-}_{jk}, w^{-}_{ij}+w^{-}_{jk}) =  \frac{6 w^{-}_{ij}}{5} + \frac{3 w^{-}_{jk}}{5} - \frac{6}{5} \\
    &\text{argmax} = ( 1 , 0 , 1 ) \text{ \quad max} =  0
\end{align*}
\\
$\bullet$ If $(w^{-}_{ij}, w^{-}_{jk}, w^{-}_{ki}) \in I_{ 3 } \times I_{ 2 } \times I_{ 3 } $: 
\begin{align*}
    &\Omega( 0 , 0 , 0 , w^{-}_{ij}, w^{-}_{jk}, w^{-}_{ij}+w^{-}_{jk}) =  \frac{\left(100 w^{-}_{jk} - 63\right) \left(26 w^{-}_{ij} + 13 w^{-}_{jk} - 10\right)}{140}  \\
    &\text{argmax} = ( \frac{63}{100} , \frac{37}{100} , 1 ) \text{ \quad max} =  - \frac{14547}{7000} \\
    &\Omega( 1 , 1 , 0 , w^{-}_{ij}, w^{-}_{jk}, w^{-}_{ij}+w^{-}_{jk}) =  \frac{\left(100 w^{-}_{jk} - 63\right) \left(10 w^{-}_{ij} + 13 w^{-}_{jk} - 2\right)}{140}  \\
    &\text{argmax} = ( \frac{63}{100} , \frac{37}{100} , 1 ) \text{ \quad max} =  - \frac{11843}{7000} \\
    &\Omega( 0 , 1 , 1 , w^{-}_{ij}, w^{-}_{jk}, w^{-}_{ij}+w^{-}_{jk}) =  \frac{\left(100 w^{-}_{jk} - 63\right) \left(10 w^{-}_{ij} - 3 w^{-}_{jk} - 2\right)}{140}  \\
    &\text{argmax} = ( \frac{63}{100} , \frac{37}{100} , 1 ) \text{ \quad max} =  - \frac{4147}{7000} \\
    &\Omega( 1 , 1 , 1 , w^{-}_{ij}, w^{-}_{jk}, w^{-}_{ij}+w^{-}_{jk}) =  \frac{3 \cdot \left(100 w^{-}_{jk} - 63\right) \left(- 2 w^{-}_{ij} - w^{-}_{jk} + 2\right)}{140}  \\
    &\text{argmax} = ( \frac{63}{100} , \frac{37}{100} , 1 ) \text{ \quad max} =  - \frac{1443}{7000}
\end{align*}

\end{document}